\newtheorem{thm}{Theorem}
\newtheorem{lem}{Lemma}
\newtheorem{prop}{Proposition}
\theoremstyle{definition}
\newtheorem{definition}{Definition}
\newtheorem{assumption}{Assumption}
\newtheorem{rem}{Remark}
\DeclareMathOperator*{\sign}{sign}
\DeclareMathOperator*{\diag}{diag}
\DeclareMathOperator*{\vvec}{vec}
\renewcommand{\Re}{\mathbb{R}}
\newcommand{\BM}{\begin{bmatrix}}
\newcommand{\EM}{\end{bmatrix}}
\newcommand{\BBM}{\big[\begin{matrix}}
\newcommand{\EEM}{\end{matrix}\big]}
\newcommand{\bbm}{[\begin{matrix}}
\newcommand{\eem}{\end{matrix}]}
\title{\LARGE \bf On the tightest interval-valued state estimator for linear systems }
\author{Laurent Bako$^1$ and Vincent Andrieu$^2$
%\thanks{}
\thanks{$^1$L. Bako is with Universit\'{e} de Lyon, Amp\`{e}re (Ecole Centrale Lyon, INSA Lyon, Universit\'{e} Claude Bernard, CNRS UMR 5005), France, F-69611.
        {\tt\small E-mail: laurent.bako@ec-lyon.fr}}%
\thanks{$^2$V. Andrieu  is with Universit\'{e} de Lyon -- CNRS UMR
5007, LAGEP, France.
        {\tt\small E-mail: vincent.andrieu@gmail.com}}%
}
\begin{document}
\setstretch{.99}

\maketitle
\thispagestyle{empty}
\pagestyle{empty}

\begin{abstract}
This paper discusses an interval-valued state estimator for linear dynamic systems. In particular, we derive an expression of the tightest possible interval estimator in the sense that it is the intersection of all interval-valued estimators. This estimator appears, in a general setting, to be an infinite dimensional dynamic system.  Therefore practical implementation requires some over-approximations which would yield a good trade-off between computational complexity and tightness. 
\end{abstract}

\section{Introduction}
State estimation is a key problem in control engineering and more generally, in decision-making systems. 
One approach to  this estimation problem is that of interval observers. 
Contrary to classical observers which generate  single valued state estimates %\stopmodif %estimates
\cite{OReilly12-Book,Andrieu06-SIAM}, interval observers form a class of robust observers which produce set-valued estimates of the state for uncertain dynamical systems. The philosophy of the approach is inspired by the so-called set-membership estimation framework  \cite{Jaulin01-Book,Chisci96-Automatica}.
However, as stated in  \cite{Mazenc14-IJNRC},  the concept of interval observer can be traced back to \cite{Gouze00-EM}. 
More precisely,  interval observers are  dynamical systems  which  provide an interval-valued trajectory  (defined by a pair of lower and upper bounds) which contains all possible state trajectories of a given uncertain system.
In this approach model uncertainties arise from input disturbances, sensor noises and unknown initial conditions.
Assuming these signals take values  in known (bounded) time-dependent  intervals, the goal is to find an interval containing the state trajectories. 
%Note that here uncertainties are considered from a deterministic point of view. % no stochastic meaning.
In these settings, many contributions have been made for different classes of systems: continuous-time Linear Time Invariant (LTI) \cite{Mazenc11-Automatica,Cacace15-TAC,Combastel13-TAC,Meslem16-CoDIT}, discrete-time LTI/LTV systems \cite{Efimov13-TAC,Mazenc14-IJNRC,Mazenc13-Automatica}, Linear Parameter Varying (LPV) systems \cite{Chebotarev15-Automatica,Efimov13-TAC-b}, nonlinear systems \cite{Raissi12-TAC,Moisan10-SCL}. For more on the interval observer literature we refer to a recent survey reported in \cite{Efimov16-ARC}.

Although the existing literature covers a large variety of systems, a question of major importance that has not received much attention so far is that of the size (or volume) of the estimated interval set. In effect, there exist in principle infinitely many interval estimators that satisfy the outer-bounding condition for the state trajectories of the system of interest.  But ideally, one would like to find the smallest possible interval set  (in some sense)  among all those which enclose the states. Hence we ask the question of how to characterize the tightest interval estimator in the sense that the upper and lower bounds are closest componentwise.

The current paper intends  to address this question.  For this purpose, a new approach is proposed to tackle the problem of designing interval estimator. 
%
%%%
For simplicity of exposition we restrict our attention to continuous-time LTI systems but the proposed approach can be extended at a moderate effort to  Linear Time-Varying (LTV) systems. The key ingredient of the proposed framework is a parametrization of the interval set in the form of a center jointly with a radius which measures the width (size) of the interval set. Then we show that simple maximization techniques allow to construct the tightest enclosing interval set. Note however that, computing numerically this tight interval-valued estimate of the state is expensive in general. We therefore discuss some approximation strategies illustrating the trade-off between quality (tightness indeed) of the estimate and the computational price to pay for it. 
Note that another aspect of the quest for tightness in interval estimation was discussed in \cite{Rami08-CDC}. There,  however, the problem was different from the one of the current paper; it was about finding an observer gain to minimize an $\ell_1$ norm  of the width of the interval estimator.

\textbf{Outline.} The remainder of the paper is organized as follows. In Section \ref{sec:Preliminaries}, we set up the estimation problem and present the technical material  employed 
for designing the estimator. 
In Section \ref{sec:Open-Loop} we discuss estimators in open-loop, that is, estimators that result only from the simulation of the state transition equation without any use of the measurement. Section \ref{sec:Closed-Loop} discusses a systematic way of transforming a classical observer into an interval-valued estimator. Section \ref{sec:Simulations} reports some numerical results confirming tightness of the proposed estimator. We conclude the paper in Section \ref{sec:Conclusion}.  

\textbf{Notations.}
$\Re$ (resp. $\Re_+$) is the set of real (resp. nonnegative real) numbers. For  a real  number  $x$, $|x|$ will refer to the absolute value  of $x$.  
For $x=\bbm x_1 & \cdots & x_n\eem^\top\in \Re^n$, $\left\|x\right\|_p$ will denote the $p$-norm of $x$ defined by $\left\|x\right\|_p=(|x_1|^p+\cdots+|x_n|^p)^{1/p}$, for $p\geq 1$. In particular for $p=\infty$, $\left\|x\right\|_\infty=\max_{i=1,\ldots,n}\left|x_i\right|$. 
For a matrix $A\in \Re^{n\times m}$, $\left\|A\right\|_2$ will denote the $2$-norm of the vector $\vvec(A)$ obtained by vectorizing the matrix $A$, i.e., $\left\|A\right\|_2=\left\|\vvec(A)\right\|_2$ (this norm is also called the Frobenius norm of the matrix $A$ and denoted $\left\|A\right\|_F$). 
If $A=[a_{ij}]\in \Re^{n\times m}$ and $B=[b_{ij}]\in \Re^{n\times m}$ are real matrices of the same dimensions, the notation $A\leq B$ will be understood as an elementwise inequality on the entries, i.e., $a_{ij}\leq b_{ij}$ for all $(i,j)$.  $|A|$ corresponds to the matrix $[|a_{ij}|]$ obtained by taking the absolute value of each entry of $A$. 
In case $A$ and $B$ are real square symmetric matrices, $A\succeq B$ (resp. $A\succ B$) means that $A-B$ is positive semi-definite (resp. positive definite). A square matrix $A$ is called \textit{Hurwitz} if all its eigenvalues have negative real parts. It is called \textit{Metzler} if all its off-diagonal entries are nonnegative. For a positive integer $n$,  we use the notation  $\mathcal{L}^{n}(\Re_+)=\left\{s:\Re_+\rightarrow\Re^n\right\}$ to refer to  the set of $n$ dimensional vector-valued  functions on $\Re_+$. 
$\mathcal L^\infty(\Re_+,\Re^{n})$ concerns the case in which the functions are bounded and measurable.
%%%
\vspace{-.15cm}
\section{Preliminaries}\label{sec:Preliminaries}
\subsection{Estimation problem settings}
\noindent Consider a Linear Time Invariant (LTI) system described by
\begin{equation}\label{eq:LTI}
	\begin{aligned}
		&\dot{x}(t)=Ax(t)+Bw(t) \\
		&y(t)= Cx(t)+v(t) 
	\end{aligned}
\end{equation}
where the state $x$ takes values in $\Re^n$,
 $w$ and $v$ are (possibly unknown) external signal respectively in $\mathcal L^\infty(\Re_+,\Re^{n_w})$ and $\mathcal L^\infty(\Re_+,\Re^{n_y})$,
$y\in \mathcal{L}^{n_y}(\Re_+)$ is a measured output. 
$A\in \Re^{n\times n}$, $B\in \Re^{n\times n_w}$ and  $C\in \Re^{n_y\times n}$ are some real matrices.

First of all, we define intervals of $\Re^n$. 
Let $\underline{x}$ and $\overline{x}$ be two vectors in $\Re^n$ such that $\underline{x}\leq \overline{x}$ with the inequality holding componentwise.  An interval of $\Re^n$, denoted $\interval{\underline{x}}{\overline{x}}$,  is the subset defined by
\begin{equation}\label{eq:Interval}
	\interval{\underline{x}}{\overline{x}}= \left\{x\in \Re^n: \underline{x}\leq x \leq \overline{x}\right\}. 
\end{equation}

\noindent Now we consider the following assumption. 
\begin{assumption}\label{assum:Bounding}
There exist (known) bounded signals $(\underline{w}, \overline{w})$ and $(\underline{v}, \overline{v})$ respectively in  $\mathcal L^\infty(\Re_+,\Re^{n_w})$ and in $\mathcal L^\infty(\Re_+,\Re^{n_y})$ such that
$\underline{w}(t)\leq w(t)\leq  \overline{w}(t) $ and $\underline{v}(t)\leq v(t)\leq  \overline{v}(t) $ for all $t\in \Re_+$.
Here again the inequalities are understood componentwise.
\end{assumption}

We consider in this paper the problem of  synthesizing 
an \textit{interval estimator} for the state of the LTI system \eqref{eq:LTI}. 
Considering that the initial state $x(0)$ of \eqref{eq:LTI} lives in an interval of the form $[\underline{x}(0), \overline{x}(0)]\subset\Re^{n}$ and that the external signals $w$ and $v$ satisfy Assumption \ref{assum:Bounding}, we want to estimate upper and lower bounds $\overline{x}(t)$ and $\underline{x}(t)$, $t\in \Re_+$, for all  possible state trajectories of the uncertain system \eqref{eq:LTI}. 
%%%%%
\begin{rem}
 A causal dynamical system $\Sigma$ with input $\xi \in \mathcal{L}^{n_\xi}(\Re_+)$, output $z\in \mathcal{L}^{n_z}(\Re_+)$ and initial state $X_0\in \Re^{n_x}$ can be described by a state-space realization (similar to  the one  in \eqref{eq:LTI}) or by its  input-output map $f_\Sigma:\Re_+ \times\Re_+\times \Re^{n_x}\times \mathcal{L}^{n_\xi}(\Re_+)$ defined by $z(t)=f_\Sigma(t_0,t,X_0,\xi)$, $t\geq t_0$. $z(t)$ is hence the output of the system  $\Sigma$ at time $t$ when it starts at time $t_0\leq t$ in state $X_0$ and is driven by the input $\xi$. See e.g., \cite{Willems71-SIAM} for more on this formalism. 
\end{rem}
%%%%%  
\begin{definition}[Interval estimator]\label{def:Interval-Estimator}
Consider the system \eqref{eq:LTI} and pose $b_w(t)=\bbm \underline{w}(t)^\top & \overline{w}(t)^\top\eem^\top$, $b_v(t)=\bbm \underline{v}(t)^\top & \overline{v}(t)^\top\eem^\top$, $\xi(t)= \bbm b_w(t)^\top & b_v(t)^\top & y(t)^\top\eem^\top$ and $X_0=\bbm \underline{x}(t_0)^\top, \overline{x}(t_0)^\top\eem^\top$ for some $t_0$.
Consider a causal dynamical system with output $(\underline{x}, \overline{x})$ defined by its input-output maps $(F,G)$ as 
\begin{equation}\label{eq:interval-estimator}
\left\{	\begin{aligned}
		&\underline{x}(t)=F(t_0,t,X_0,\xi) \\
		&\overline{x}(t)=G(t_0,t,X_0,\xi),
	\end{aligned} \right. 
\end{equation}
where $F:\Re_+\times \Re_+\times \Re^{2n}\times\mathcal{L}^{n_\xi }(\Re_+)\rightarrow \Re^n$ and $G$ (defined similarly to $F$) are some operators.\\ 
The system \eqref{eq:interval-estimator} is called an \textit{interval estimator} for system \eqref{eq:LTI} if: 
\begin{enumerate}
	\item[(a)] Any state trajectory $x$ of \eqref{eq:LTI} satisfies   $\underline{x}(t)\leq x(t)\leq  \overline{x}(t) $  for all $t\geq t_0$, whenever $\underline{x}(t_0)\leq x(t_0)\leq  \overline{x}(t_0) $
	\item[(b)] \eqref{eq:interval-estimator} is Bounded Input-Bounded Output (BIBO) stable, i.e. $(\underline{x}, \overline{x})$ is bounded whenever $X_0$ and $\xi$ are bounded.  
\end{enumerate}
\end{definition}
\noindent Here  the signals $b_w$, $b_v$, $y$ and the initial state vector $X_0$ are viewed as the inputs of system \eqref{eq:interval-estimator}.  Boundedness is understood in the sense of the infinity norm being finite. 

We will discuss two types of interval estimators: open-loop interval estimators (or simulators) where \eqref{eq:interval-estimator} does not depend on the measurements $y$ and the measurement noise $b_v$ ; and closed-loop interval estimators where measurement is fed back to the estimator.  \\
%%%%%%%%%%%%%%%%%%%%%%%%%%
There are in principle infinitely many estimators that qualify as interval estimators in the sense of  Definition \ref{def:Interval-Estimator}. It is therefore desirable to define a performance index (measuring e.g. the size of the estimator) which selects the best estimator among all. We will be interested here in the smallest interval estimator in the following sense. 
\vspace{-2pt}
\begin{definition}\label{def:smallest-interval}
Let $\mathcal{S}$ denote a subset of $\Re^n$. An interval $\mathcal{I}_{\mathcal{S}}\subset\Re^n$ is called the tightest interval containing $\mathcal{S}$ if $\mathcal{S}\subset \mathcal{I}_{\mathcal{S}}$ and if for any interval $\mathcal{J}$ of $\Re^n$, 
$\mathcal{S}\subset \mathcal{J} \: \Rightarrow \: \mathcal{I}_{\mathcal{S}} \subset \mathcal{J}.$
\end{definition}
\noindent In other words, the tightest interval $\mathcal{I}_{\mathcal{S}}$ "generated" by $\mathcal{S}\subset\Re^n$ is the intersection of all intervals containing $\mathcal{S}$. 
%%%%%%%%%%%%%%%%%%%%%%%%%%%%%%%%%%%%%%%%%%%%%%%%%%
\vspace{-.15cm}
\subsection{Preliminary material on interval representation}\label{subsec:Interval-Representation} 
An important observation for future developments of the paper is that $\interval{\underline{x}}{\overline{x}}$ can be equivalently represented by a set of the form 
\begin{equation}
	\mathcal{C}(c_x,p_x)=\big\{c_x+P_x \alpha : \alpha\in \Re^n, \: \left\|\alpha\right\|_\infty \leq 1\big\}
\end{equation}
where 
 \begin{equation}
	 c_x=\dfrac{\overline{x}+\underline{x}}{2}, \quad P_x=\diag\big(p_x\big), \quad p_x = \dfrac{\overline{x}-\underline{x}}{2}
 \end{equation}
The notation $\diag(v)$ for a vector $v=\bbm v_1 & \cdots & v_n\eem^\top$ refers to  the diagonal matrix whose diagonal elements are the entries of $v$. 
 We will call the so-defined $c_x$ the center of the interval $\interval{\underline{x}}{\overline{x}}$ and $p_x$ its radius. To sum up, the interval set can be equivalently represented by the pairs $(\underline{x},\overline{x})\in \Re^n\times \Re^n$ and $(c_x,p_x)\in \Re^n\times \Re_+^n$  i.e.,  $\interval{\underline{x}}{\overline{x}}=\mathcal{C}(c_x,p_x)$.  
Finally, it will be useful to keep in mind that then $\underline{x}=c_x-p_x$ and $\overline{x}=c_x+p_x$. 
% 

%%%%%%%%%%%%%%%%%
The following lemma states a key result for later uses. 
\begin{lem}\label{lem:Tightest-Interval}
Let $(c_z,p_z)\in \Re^{m}\times \Re_+^{m}$ and $(c_w(t),p_w(t))\in \Re^{n_w}\times \Re_+^{n_w}$ be center-radius representations of some intervals $\interval{\underline{z}}{\overline{z}}$ and $\interval{\underline{w}(t)}{\overline{w}(t)}$ where 
 $c_w$ in $\mathcal L^\infty(\Re_+,\Re^{n_w})$  and $p_w$ in $\mathcal L^\infty(\Re_+,\Re_+^{n_w})$. 
Let $F\in \Re^{n\times m}$ be a fixed value matrix and  
$H$ be a matrix function in  $\mathcal L^\infty(\Re_+,\Re^{n\times n_w})$.
Consider the set $\mathbb{I}$ defined by  
\begin{multline}
	\mathbb{I}=  
	\left\{Fz+\int_{t_0}^{t_1}H(\tau)w(\tau)d\tau :\right. \\
	\Big. z\in \interval{\underline{z}}{\overline{z}}, \:  w \text{ measurable},  w(\tau)\in \interval{\underline{w}(\tau)}{\overline{w}(\tau)} \Big\}
	%\: w(\tau)\in \mathcal{C}(c_w(\tau),p_w(\tau))  \Big\}
\end{multline}
with $\interval{t_0}{t_1}$ being some interval of $\Re_+$.  
Finally, consider the pair $(c,p)$ defined by: 
%%%%
\begin{align}
	& c = F c_z+\int_{t_0}^{t_1}H(\tau)c_w(\tau)d\tau \label{eq:c} \\
	& p = \left|F\right|p_z+\int_{t_0}^{t_1}\left|H(\tau)\right|p_w(\tau)d\tau \label{eq:p}
\end{align}
Then, 
$\interval{c-p}{c+p}$ is the tightest interval set enclosing $\mathbb{I}$ in the sense of Definition \ref{def:smallest-interval}. 
\end{lem}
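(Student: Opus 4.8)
The plan is to establish two things: first, that $\mathbb{I} \subseteq \interval{c-p}{c+p}$ (the interval contains the set), and second, that $\interval{c-p}{c+p}$ is the \emph{smallest} such interval, i.e.\ every interval containing $\mathbb{I}$ contains $\interval{c-p}{c+p}$. By Definition~\ref{def:smallest-interval}, the second part is equivalent to showing that for each coordinate $i$, the value $c_i - p_i$ is actually attained (or approached) as the $i$-th component of some element of $\mathbb{I}$, and likewise $c_i + p_i$; this forces any enclosing interval to reach down to $c_i - p_i$ and up to $c_i + p_i$.

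For the inclusion, I would take an arbitrary element $\zeta = Fz + \int_{t_0}^{t_1} H(\tau) w(\tau)\, d\tau$ of $\mathbb{I}$, write $z = c_z + P_z \alpha$ with $\|\alpha\|_\infty \le 1$ and $w(\tau) = c_w(\tau) + P_w(\tau)\beta(\tau)$ with $\|\beta(\tau)\|_\infty \le 1$ for a.e.\ $\tau$ (using the center--radius representation of Section~\ref{subsec:Interval-Representation}), and then subtract $c$ as defined in \eqref{eq:c}. This yields $\zeta - c = F P_z \alpha + \int_{t_0}^{t_1} H(\tau) P_w(\tau) \beta(\tau)\, d\tau$. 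Taking absolute values componentwise and using the elementary bound $|Mu| \le |M|\,|u|$ for a matrix $M$ and vector $u$ (applied pointwise in $\tau$ inside the integral, and the triangle inequality for integrals), together with $|\alpha| \le \mathbf{1}$ and $|\beta(\tau)| \le \mathbf{1}$, gives $|\zeta - c| \le |F| p_z + \int_{t_0}^{t_1} |H(\tau)| p_w(\tau)\, d\tau = p$, which is exactly $\zeta \in \interval{c-p}{c+p}$.

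For tightness, fix a coordinate $i \in \{1,\dots,n\}$. I would choose $\alpha$ and $\beta(\cdot)$ to make the $i$-th component of $FP_z\alpha + \int H(\tau)P_w(\tau)\beta(\tau)\,d\tau$ equal to $+p_i$: namely set each entry of $\alpha$ to the sign of the corresponding entry of the $i$-th row of $FP_z$ (i.e.\ $\alpha_j = \sign((FP_z)_{ij})$), and for a.e.\ $\tau$ set $\beta_k(\tau) = \sign((H(\tau)P_w(\tau))_{ik})$. Both choices are admissible ($\|\alpha\|_\infty \le 1$ and $\beta$ measurable with $\|\beta(\tau)\|_\infty \le 1$; measurability of $\beta$ follows from measurability of $H$ and $p_w$). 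Plugging in and using $|x| = x\,\sign(x)$ entrywise, the $i$-th component of the resulting element of $\mathbb{I}$ equals $c_i + \sum_j |(FP_z)_{ij}| + \int_{t_0}^{t_1} \sum_k |(H(\tau)P_w(\tau))_{ik}|\, d\tau = c_i + (|F|p_z)_i + \left(\int_{t_0}^{t_1}|H(\tau)|p_w(\tau)\,d\tau\right)_i = c_i + p_i$, noting $|F|p_z = |FP_z|\mathbf{1}$ and similarly for the integral term since $P_z, P_w$ are diagonal with nonnegative entries. The symmetric choice with flipped signs attains $c_i - p_i$. Hence any interval $\mathcal{J} = \interval{a}{b}$ containing $\mathbb{I}$ must satisfy $a_i \le c_i - p_i$ and $b_i \ge c_i + p_i$ for every $i$, so $\interval{c-p}{c+p} \subseteq \mathcal{J}$, completing the proof.

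The main obstacle I anticipate is a measure-theoretic one: justifying that the pointwise-sign choice of $\beta(\tau)$ produces a genuinely measurable (hence admissible) integrand, and that one may exchange the integral with the componentwise absolute-value/sign manipulation. This is routine but needs the boundedness and measurability hypotheses on $H$ and $p_w$ stated in the lemma; one should remark that $\tau \mapsto \sign((H(\tau)P_w(\tau))_{ik})$ is measurable as a composition of the measurable map $\tau \mapsto H(\tau)P_w(\tau)$ with the (Borel-measurable) sign function, and the resulting integrand is dominated by $|H(\tau)|p_w(\tau) \in \mathcal{L}^\infty \subseteq \mathcal{L}^1$ on the compact interval $\interval{t_0}{t_1}$, so everything is well defined. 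If one wishes to avoid any subtlety with the sign function vanishing on a set of positive measure, note it does not matter: on that set the integrand contributes zero to both sides, so the identity still holds.
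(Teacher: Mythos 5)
Your proposal is correct and follows essentially the same route as the paper's proof: both use the center--radius decomposition $z=c_z+P_z\alpha$, $w(\tau)=c_w(\tau)+P_w(\tau)\beta(\tau)$ to get the inclusion $\mathbb{I}\subset\interval{c-p}{c+p}$, and both establish tightness by exhibiting, for each coordinate $i$, sign-pattern choices of $\alpha$ and $\beta(\cdot)$ that attain $c_i\pm p_i$, which forces any enclosing interval to contain $\interval{c-p}{c+p}$. Your explicit treatment of the measurability of the pointwise-sign selection is a welcome addition that the paper leaves implicit.
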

%%%%%
\begin{proof}
We first show that $\mathbb{I}\subset \interval{c-p}{c+p}$. Let $x\in \mathbb{I}$. 
Then $x$ can be written in the form
$$ x = Fz+\int_{t_0}^{t_1}H(\tau)w(\tau)d\tau, $$ 
where $z$ and $w$ obey the conditions in the definition of $\mathbb{I}$. 
As discussed in Section \ref{subsec:Interval-Representation} we can describe the uncertain vector $z$ and  uncertain signal $w$  by $z=c_z+P_z\alpha_z$ and $w(\tau)=c_w(\tau)+P_w(\tau)\alpha_w(\tau)$ respectively with $\alpha_z\in \Re^{n}$ and $\alpha_w(\tau)\in \Re^{n_w}$ such that $\left\|\alpha_z\right\|_\infty\leq 1$ and $\left\|\alpha_w(\tau)\right\|_\infty\leq 1$ for all $\tau$ and $P_z=\diag(p_z)$, $P_w(\tau)=\diag(p_w(\tau))$. 
It follows, by plugging these representations in the expression of $x$, that $x=c+\psi$ with $c$ expressed as in \eqref{eq:c} and 
$$\psi=FP_z\alpha_z+\int_{t_0}^{t_1} H(\tau) P_w(\tau)\alpha_w(\tau)d\tau. $$
 Now  consider the  vectors $r^+\in \Re^n$ and $r^-\in \Re^n$ defined  for all $i=1,\ldots,n$, by
$$\begin{aligned}
	r_{i}^+=&\max\Big\{\psi_i: \left\|\alpha_z\right\|_\infty\leq 1,\Big. 
	 \Big. \left\|\alpha_w(\tau)\right\|_\infty\leq 1, \: \tau\in \interval{t_0}{t_1}\Big\}\\
	r_{i}^-=&\min\Big\{\psi_i: \left\|\alpha_z\right\|_\infty\leq 1,\Big. 
	 \Big. \left\|\alpha_w(\tau)\right\|_\infty\leq 1, \: \tau\in \interval{t_0}{t_1}\Big\}
\end{aligned}$$
with $\left\|\cdot\right\|_\infty$ referring to the infinite norm of vectors. 
By denoting the $i$-th row of $F$ with $f_i^\top$ and that of $H(\tau)$ with $h_i^\top(\tau)$, we see that the maximizing values of the decision variables are $\alpha_z^*=\sign(f_i)$ and  $\alpha_w^*(\tau)=\sign(h_i(\tau))$, $t_0\leq \tau \leq t_1$ hence leading to 
$$ r_{i}^+=\big|f_i^\top\big| p_z+\int_{t_0}^{t_1} \big|h_i^\top(\tau)\big|  p_w(\tau)d\tau.$$
Here $\sign$ refers to the sign function operating componentwise. 
Hence $r_i^+$ is equal to the $i$-th entry of $p$  defined in \eqref{eq:p}  and so $r^+=p$. 
Similarly it can be seen that $r^-=-p$. By definition of $r^+$ and $r^-$, it is obvious that $c+r^-\leq x\leq  c+r^+$. Hence $\mathbb{I}\subset \interval{c-p}{c+p}$. \\
For clarity of the rest of the proof, we additionally observe that since $-p_i=\min_{x\in \mathbb{I}}(x-c)_i$ and $p_i=\max_{x\in \mathbb{I}}(x-c)_i$ are minimum and maximum values respectively, they are attainable for some elements $s^i$ and $\tilde{s}^i$ of $\mathbb{I}$, i.e., $(s^i)_i=c_i-p_i$ and $(\tilde{s}^i)_i=c_i+p_i$. 
\\
\textit{Proof of tightness.} We are now left with proving that $\interval{c-p}{c+p}$ is the tightest enclosing interval set for $\mathbb{I}$. For this purpose, consider another interval set $\interval{\underline{g}}{\overline{g}}$  such that $\mathbb{I}\subset \interval{\underline{g}}{\overline{g}}$.   Pose $p_g=(\overline{g}-\underline{g})/2$ and $c_g=(\overline{g}+\underline{g})/2$ and consider $x\in \mathbb{I}$.  Since $x$ lies in the intersection of $\interval{c-p}{c+p}$ and $\interval{\underline{g}}{\overline{g}}$,  there is $\alpha$ and $\alpha_g$ all with infinity norm less than $1$, such that 
$x=c+\diag(p)\alpha =c_g+\diag(p_g)\alpha_g,$
which translates componentwise into
\begin{equation}\label{eq:component-wise}
	\begin{aligned}
		-p_{g,i}-c_{i}+c_{g,i}&\leq p_{i}\alpha_{i} \leq p_{g,i}-c_{i}+c_{g,i}
	\end{aligned}
\end{equation}
for $i=1,\ldots,n$. 
From the first part of the proof, we know that for any $i=1,\ldots,n$, there exists $(s^i,\tilde{s}^i)\in \mathbb{I}^2$ such that $(s^i)_i=c_i-p_i$  and $(\tilde{s}^i)_i=c_i+p_i$. Since $x$ is an arbitrary element of $\mathbb{I}$, the inequalities \eqref{eq:component-wise} must hold for both particular instances $x=s^i$ and $x=\tilde{s}^i$. And for these values of $x\in \mathbb{I}$, $\alpha_i$ in \eqref{eq:component-wise} clearly takes the values $-1$ and $+1$ respectively. 
It follows that 
$$-p_{g,i}-c_{i}+c_{g,i}\leq \pm p_{i} \leq p_{g,i}-c_{i}+c_{g,i}$$
from which we see that 
$c_{g,i}-p_{g,i}\leq c_{i}-p_{i}$ and $c_{i}+p_{i}\leq c_{g,i}+p_{g,i}$. 
Hence $\interval{c-p}{c+p}\subset \interval{\underline{g}}{\overline{g}}$. 
This shows that $\interval{c-p}{c+p}$ is the tightest interval containing $\mathbb{I}$. 
\end{proof}
%%%%%%%%%%%%%%%%%%%

\section{Open-loop interval estimator for LTI systems}\label{sec:Open-Loop}

\subsection{Open-loop simulation: the best interval estimator} 
We first discuss a robust simulation of the state trajectory of  the LTI system in \eqref{eq:LTI} under uncertain perturbation $w$  and when the initial state $x(0)$  belongs to  a known  interval set. For this purpose we will assume that the  matrices $A$ and $B$ have fixed and known values.  
We use the notations $(c_w(t),p_w(t))\in \Re^{n_w}\times \Re_+^{n_w}$ and $(c_x(t),p_x(t))\in \Re^{n}\times \Re_+^{n}$, $t\in  \Re_+$, to denote the center-radius representations for the intervals $[\underline{w}(t),\overline{w}(t)]$ and $[\underline{x}(t),\overline{x}(t)]$ respectively. 
%%%%%%%%%%%%

%%%%
\begin{thm}\label{thm:Tightest-Estimator}
Let the initial conditions and the uncertain input sets of system \eqref{eq:LTI} be described respectively  by $(c_x(0),p_x(0))$ and  $(c_w(t),p_w(t))$ for $t\in \Re_+$. Assume that system \eqref{eq:LTI} is stable, i.e., $A$ is Hurwitz. Then the interval $\interval{\underline{x}}{\overline{x}}$ defined by 
\begin{equation}\label{eq:IntEst}
	\underline{x}(t)=c_x(t)-p_x(t) \quad \mbox{ and } \quad \overline{x}(t)=c_x(t)+p_x(t)
\end{equation}
 with 
\begin{eqnarray}
 c_x(t)&=&e^{At}c_x(0)+\int_{0}^t e^{A(t-\tau)}B c_w(\tau)d\tau \label{eq:cx-continuous-time}\\
	p_x(t)&=&\left|e^{At}\right|p_x(0)+\int_{0}^t\big|e^{A(t-\tau)}B\big|p_w(\tau)d\tau, \label{eq:px-continuous-time}
\end{eqnarray}
is the tightest interval-valued estimator for system \eqref{eq:LTI} in the sense of Definition \ref{def:smallest-interval}.
\end{thm}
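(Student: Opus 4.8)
The plan is to derive the statement from Lemma~\ref{lem:Tightest-Interval} by identifying the reachable set of \eqref{eq:LTI} with the set $\mathbb{I}$ appearing there. The starting point is the variation-of-constants formula: every state trajectory of \eqref{eq:LTI} satisfies $x(t)=e^{At}x(0)+\int_0^t e^{A(t-\tau)}Bw(\tau)\,d\tau$. Hence the set of all possible values of $x(t)$, as $x(0)$ ranges over $\interval{\underline{x}(0)}{\overline{x}(0)}$ and $w$ over the measurable selections of $\tau\mapsto\interval{\underline{w}(\tau)}{\overline{w}(\tau)}$, is exactly the set $\mathbb{I}$ of Lemma~\ref{lem:Tightest-Interval} for the choice $F=e^{At}$, $z=x(0)$, $H(\tau)=e^{A(t-\tau)}B$, $t_0=0$, $t_1=t$; note $H$ is continuous, hence bounded, on the compact interval $[0,t]$, so the lemma applies. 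Its conclusion, rewritten through these identifications, is precisely that $\interval{c_x(t)-p_x(t)}{c_x(t)+p_x(t)}$ with $c_x,p_x$ as in \eqref{eq:cx-continuous-time}--\eqref{eq:px-continuous-time} is the tightest interval containing that reachable set; in particular $\mathbb{I}\subset\interval{c_x(t)-p_x(t)}{c_x(t)+p_x(t)}$, which is condition (a) of Definition~\ref{def:Interval-Estimator}.

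I would then check BIBO stability, condition (b), which is where the assumption that $A$ is Hurwitz enters. Choosing $M,\lambda>0$ with $\|e^{At}\|_\infty\leq Me^{-\lambda t}$, the standard convolution estimate bounds $\|c_x(t)\|_\infty$ by $M\|c_x(0)\|_\infty+(M\|B\|_\infty/\lambda)\sup_{\tau}\|c_w(\tau)\|_\infty$, which is finite. For the radius one needs the elementary facts $\|\,|M|\,\|_\infty=\|M\|_\infty$ (the induced $\infty$-norm is the maximum absolute row sum) and $|MN|\leq|M|\,|N|$ entrywise; together they give $\|\,|e^{At}|\,\|_\infty=\|e^{At}\|_\infty\leq Me^{-\lambda t}$ and $\|\,|e^{A(t-\tau)}B|\,\|_\infty\leq\|e^{A(t-\tau)}\|_\infty\|B\|_\infty\leq Me^{-\lambda(t-\tau)}\|B\|_\infty$, so $p_x(t)$ obeys the same bound as $c_x(t)$ with the radii in place of the centers. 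Thus $(\underline{x},\overline{x})$ is bounded whenever $X_0$ and $\xi$ are, and \eqref{eq:IntEst} is indeed an interval estimator.

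It remains to establish tightness among all interval estimators. Let $\interval{\underline{g}}{\overline{g}}$ be any interval estimator for \eqref{eq:LTI} fed with the same initial datum $X_0=\bbm\underline{x}(0)^\top&\overline{x}(0)^\top\eem^\top$. Condition (a) applied to it says that every trajectory with $x(0)\in\interval{\underline{x}(0)}{\overline{x}(0)}$ satisfies $x(t)\in\interval{\underline{g}(t)}{\overline{g}(t)}$ for all $t$; equivalently, the reachable set $\mathbb{I}$ at time $t$ is contained in $\interval{\underline{g}(t)}{\overline{g}(t)}$. Since Lemma~\ref{lem:Tightest-Interval} guarantees that $\interval{c_x(t)-p_x(t)}{c_x(t)+p_x(t)}$ is the \emph{tightest} interval containing $\mathbb{I}$, Definition~\ref{def:smallest-interval} yields $\interval{\underline{x}(t)}{\overline{x}(t)}\subset\interval{\underline{g}(t)}{\overline{g}(t)}$ for every $t$, which is the assertion.

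The main point requiring care is not deep: one must confirm that the reachable set is \emph{exactly} $\mathbb{I}$ — every measurable (hence, by Assumption~\ref{assum:Bounding}, bounded) selection of the uncertainty tube yields an admissible trajectory of \eqref{eq:LTI}, and conversely every admissible trajectory is of this form — so that Lemma~\ref{lem:Tightest-Interval} applies verbatim, and one must handle the absolute-value matrices $|e^{At}|$ and $|e^{A(t-\tau)}B|$ in the stability estimate through the identity $\|\,|M|\,\|_\infty=\|M\|_\infty$. Everything else is bookkeeping on top of the lemma.
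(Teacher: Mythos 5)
Your proof is correct and follows the same route as the paper: both apply Lemma~\ref{lem:Tightest-Interval} to the variation-of-constants formula with $F=e^{At}$, $H(\tau)=e^{A(t-\tau)}B$, and invoke the Hurwitz assumption for condition (b) of Definition~\ref{def:Interval-Estimator}. You simply spell out the BIBO estimate and the reachable-set identification that the paper leaves implicit.
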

%%%%%%%%%%%%%%
\begin{proof}
That $\interval{\underline{x}}{\overline{x}}$ defined in \eqref{eq:IntEst} is the tightest interval-valued trajectory containing the trajectories of \eqref{eq:LTI} is a statement that follows directly from Lemma \ref{lem:Tightest-Interval}. 
It suffices  to note that the solution of \eqref{eq:LTI} takes the form  
$$
x(t)=e^{At}x(0)+\int_{0}^t e^{A(t-\tau)}B w(\tau)d\tau
$$
and apply the lemma. 
As to condition (b) of Definition \ref{def:Interval-Estimator} it is an immediate consequence of the stability assumption on system \eqref{eq:LTI}. 
\end{proof}
%%%%%%%

Framed differently, the theorem states that the interval estimator \eqref{eq:IntEst}-\eqref{eq:px-continuous-time} is the intersection of all enclosing intervals for the state trajectories generated by the uncertain system \eqref{eq:LTI}. Now the question we ask is how to compute the proposed estimates. Of course, a direct implementation of the equations \eqref{eq:IntEst}-\eqref{eq:px-continuous-time} might be overly expensive in finite-time and unfeasible when the time horizon considered for estimation goes to infinity. We will therefore be searching, when possible, for a finite dimensional state-space realization for the signals $c_x$ and $p_x$. To begin with, note that $c_x$ can be simply realized as $\dot{c}_x=Ac_x+Bc_w$. So the challenge is rather related to the realization of $p_x$. In the sequel, we discuss a few particular cases where  a finite dimensional realization exists.

\subsection{On the realization of the tightest estimator}
We start by observing that if $A$  is a Metzler matrix and if $B$ is either nonpositive or nonnegative, then $p_x$ in \eqref{eq:px-continuous-time} can be simply realized by  $\dot{p}_x(t)=Ap_x(t)+|B|p_w(t)$. This follows from the fact that $e^{At}$ is a nonnegative matrix for all $t\geq 0$ whenever $A$ is a Metzler matrix.
 %(A proof of this can be found in the footnote)\footnote{Saying that $A$ is Metzler implies that there is $\lambda\in \Re$ such that $F\triangleq A+\lambda I\geq 0$. Therefore $e^{A}=e^{F-\lambda I}=e^{-\lambda}e^F$ (since $-\lambda I$ and $F$ commute). $F$ being nonnegative, all powers $F^k$ of $F$, with $k$ a positive integer,  are clearly nonnegative. It follows that $e^F=\sum_{k=0}^\infty F^k/k!$ (by the definition of matrix exponential) is nonnegative  and therefore so is $e^A$.}. 
Consequently, one can drop the absolute value symbols in \eqref{eq:px-continuous-time} hence yielding the simple realization displayed above. 

A second remark concerns the scenario where $p_w$ is constant. In this latter case, a simple realization of $p_x$ can be obtained as stated in the following proposition.  
%%%%
\begin{prop}\label{eq:px-for-constant-pw}
Assume that $p_w(t)=p_w(0)$ for all $t\in \Re_+$, i.e., $p_w$ is constant. Then the signal $p_x$ in \eqref{eq:px-continuous-time} can be realized as follows:
\begin{equation}\label{eq:realization-px}
	\left\{\begin{aligned}
		&\dot{M}(t)=AM(t), \quad M(0)=I_{n}\\
		&\dot{r}(t) = \left|M(t)B\right|p_w(0), \quad r(0)=0\\
		&p_x(t)=\left|M(t)\right|p_x(0)+r(t)
	\end{aligned}\right.
\end{equation}
with state $(M(t),r(t))\in \Re^{n\times n}\times \Re^{n}$ and $I_{n}$ being the identity matrix of order $n$. 
\end{prop}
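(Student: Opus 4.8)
The plan is to verify directly that the three-line realization in \eqref{eq:realization-px} reproduces, entry by entry, the integral formula \eqref{eq:px-continuous-time} for $p_x$ under the standing hypothesis that $p_w$ is constant. The key observation is that the matrix ODE $\dot M(t)=AM(t)$ with $M(0)=I_n$ has the unique solution $M(t)=e^{At}$; this is the classical existence-uniqueness fact for linear matrix differential equations, so the first component of the realization simply produces the matrix exponential. Once this is identified, the term $\left|M(t)\right|p_x(0)$ appearing in the output equation is exactly $\left|e^{At}\right|p_x(0)$, which matches the first summand of \eqref{eq:px-continuous-time}.

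Next I would handle the $r$-component. Since $p_w$ is constant and equals $p_w(0)$, we have $|M(\tau)B|p_w(0)=\big|e^{A\tau}B\big|p_w(0)$, so integrating $\dot r(\tau)=|M(\tau)B|p_w(0)$ from $0$ to $t$ with $r(0)=0$ gives
\begin{equation}
r(t)=\int_0^t \big|e^{A\tau}B\big|p_w(0)\,d\tau .
\end{equation}
It then remains to reconcile this with the convolution integral $\int_0^t \big|e^{A(t-\tau)}B\big|p_w(\tau)\,d\tau$ in \eqref{eq:px-continuous-time}. Using that $p_w(\tau)\equiv p_w(0)$ is pulled out of the integral, one performs the change of variable $\sigma=t-\tau$, which maps $[0,t]$ onto $[0,t]$ and turns $e^{A(t-\tau)}$ into $e^{A\sigma}$ while leaving $d\tau$ replaced by $d\sigma$ up to sign, so that $\int_0^t\big|e^{A(t-\tau)}B\big|p_w(0)\,d\tau=\int_0^t\big|e^{A\sigma}B\big|p_w(0)\,d\sigma=r(t)$. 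Adding the two summands gives precisely $p_x(t)=\left|M(t)\right|p_x(0)+r(t)$ as claimed, so the realization is correct.

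There is essentially no hard obstacle here; the only points requiring a modicum of care are (i) invoking uniqueness of solutions of the linear matrix ODE to legitimately write $M(t)=e^{At}$, and (ii) being careful with the absolute-value operation when performing the time-reversal change of variables — one must note that $|\cdot|$ is applied entrywise and commutes with the reindexing of the integrand, and that the Lebesgue integrability of $\tau\mapsto \big|e^{A\tau}B\big|p_w(0)$ on the bounded interval $[0,t]$ (continuity suffices) justifies the substitution. I would also remark that, although the state $M(t)$ lives in $\Re^{n\times n}$ so the realization is not of minimal dimension, it is finite-dimensional, which is the point of the proposition; if desired one could add that the same argument shows $r$ satisfies the closed recursion obtained by noting $r(t)$ depends only on $t$ through the upper limit, but this is not needed for the statement.
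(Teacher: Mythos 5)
Your proof is correct and is precisely the ``simple calculations'' the paper alludes to (the paper omits the details entirely): identifying $M(t)=e^{At}$ by uniqueness of the linear matrix ODE, integrating $\dot r$, and reversing time in the convolution integral using the constancy of $p_w$. Nothing further is needed.
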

The proof of the proposition follows by simple calculations. 
\noindent We will show below that even though $p_w$ is not constant in general, we can rely on this proposition to construct a nice over-approximation of the tightest interval estimator.  
%%%%
%-------------------------------
\subsection{Some approximations of the tightest interval estimator}\label{subsec:Approximation}
As it turns out, apart from some special situations, implementing the tight estimator \eqref{eq:IntEst}-\eqref{eq:px-continuous-time} in the most general case is intractable in practice. We therefore consider in this section the question of whether one could over-approximate $p_x$ by a more easily realizable signal $\hat{p}_x$. 
 In order to discuss this question, let us recall  some basic facts  that will be useful.
%%%%% 
\begin{lem}\label{lem:Inequalities}
Let $A$ and $B$ be matrices of compatible dimensions. Then the following properties hold (with all matrix inequalities understood entrywise): 
\begin{subequations}
\begin{align}
	&\left|A+B\right|\leq \left|A\right|+\left|B\right| \label{eq:sum}\\
	&\left|AB\right|\leq \left|A\right|\left|B\right|  \label{eq:product}\\
	%& \left|e^{A}\right|\leq e^{\left|A\right|} \label{eq:exponential}
	%&\left|A^r\right|\leq \left|A\right|^r \quad \forall \: r\in \mathbb{N} \label{eq:power}  \\
	&|A|\leq B \quad \Rightarrow \quad  \left\|A\right\|_2\leq \left\|B\right\|_2 \label{eq:inequality-norm} \\
	&\left\|A\right\|_2=\left\|\left|A\right|\right\|_2 \label{eq:equality-norm}\\
		%\mbox{ (i.e., the $2$-norms\footnote{which is here, as defined in the notation section the same as the Frobenius norm} of $A$ and $\left|A\right|$ are equal)}\\
	& \left|e^{A}\right|\leq e^{\psi(A)}\leq e^{\left|A\right|} \label{eq:exponential}
\end{align}
	\end{subequations}
In \eqref{eq:exponential}, $\psi(A)$ is the matrix defined by $[\psi(A)]_{ij}=|A_{ij}|$ if $i\neq j$ and $[\psi(A)]_{ij}=A_{ij}$ if $i=j$. 
Also, \eqref{eq:equality-norm}  means that the $2$-norms\footnote{Recall that, as defined in the notation section, the $2$-norm is here the same as the Frobenius norm.} of $A$ and $\left|A\right|$ are equal. 
\end{lem}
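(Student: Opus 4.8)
The five properties in Lemma~\ref{lem:Inequalities} are all standard facts about the entrywise absolute value of matrices, and I would prove them one at a time, each by a short direct computation. The plan is as follows.

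For \eqref{eq:sum}, I would argue entrywise: for each pair $(i,j)$, the triangle inequality for real numbers gives $|A_{ij}+B_{ij}|\leq|A_{ij}|+|B_{ij}|$, which is exactly the $(i,j)$ entry of the claimed matrix inequality. For \eqref{eq:product}, again work entrywise: $(AB)_{ij}=\sum_k A_{ik}B_{kj}$, so by the triangle inequality and then $|A_{ik}B_{kj}|=|A_{ik}||B_{kj}|$ we get $|(AB)_{ij}|\leq\sum_k|A_{ik}||B_{kj}|=(|A|\,|B|)_{ij}$. For \eqref{eq:inequality-norm}, suppose $|A|\leq B$ (which in particular forces $B\geq 0$ entrywise). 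For any vector $x$, $|Ax|\leq|A|\,|x|\leq B|x|$ entrywise by \eqref{eq:product} applied with $x$ as a column, hence $\|Ax\|_2\leq\|B|x|\|_2$; taking the supremum over $\|x\|_2\leq 1$ and noting $\||x|\|_2=\|x\|_2$ yields $\|A\|_2\leq\|B\|_2$. Property \eqref{eq:equality-norm} is the special case $B=|A|$ of \eqref{eq:inequality-norm} together with its reverse (applied to $|A|$ and $B=|A|$, since trivially $||A||=|A|\leq|A|$ gives $\||A|\|_2\leq\|A\|_2$), so the two norms coincide; alternatively one can invoke $\|A\|_2^2=\rho(A^\top A)$ and a Perron–Frobenius comparison, but the two-sided use of \eqref{eq:inequality-norm} is cleaner.

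The one requiring genuine (though still elementary) care is \eqref{eq:exponential}, and I expect this to be the main obstacle. For the right inequality $|e^{A}|\leq e^{|A|}$, I would expand $e^{A}=\sum_{k\geq 0}A^k/k!$, apply \eqref{eq:sum} repeatedly to the partial sums, then \eqref{eq:product} iterated to get $|A^k|\leq|A|^k$, and conclude $|e^A|\leq\sum_{k\geq 0}|A|^k/k!=e^{|A|}$, with convergence of the series justifying passage to the limit entrywise. For the left inequality $|e^{A}|\leq e^{\psi(A)}$, I would pick a scalar $\mu\geq 0$ large enough that $\psi(A)+\mu I\geq 0$ entrywise (e.g.\ $\mu=\max_i|A_{ii}|$), write $e^{A}=e^{-\mu t}\,e^{(A+\mu I)}$ at $t=1$ — more precisely use that $A$ and $\mu I$ commute so $e^{A}=e^{-\mu}e^{A+\mu I}$ — and observe that $|A+\mu I|\leq\psi(A)+\mu I$ entrywise because adding $\mu$ to the diagonal makes each diagonal entry $A_{ii}+\mu$ satisfy $|A_{ii}+\mu|\le|A_{ii}|+\mu$ while off-diagonal entries are unchanged. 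Then by the already-proved right inequality applied to $A+\mu I$ together with monotonicity of $t\mapsto e^{tM}$ in $M$ for entrywise-nonnegative increments (which itself follows from the series expansion and \eqref{eq:product}), $|e^{A+\mu I}|\leq e^{|A+\mu I|}\leq e^{\psi(A)+\mu I}=e^{\mu}e^{\psi(A)}$, and multiplying by $e^{-\mu}$ gives $|e^{A}|\leq e^{\psi(A)}$. The delicate point is the monotonicity claim $0\leq M_1\leq M_2\Rightarrow e^{M_1}\leq e^{M_2}$ entrywise for nonnegative matrices, which I would establish by comparing the series term by term using that products of nonnegative matrices are monotone in each factor.

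Finally, the chain $e^{\psi(A)}\leq e^{|A|}$ is immediate since $\psi(A)\leq|A|$ entrywise (they differ only on the diagonal, where $A_{ii}\leq|A_{ii}|$) and both are nonnegative matrices, so the same entrywise monotonicity of the matrix exponential on nonnegative matrices applies. This completes \eqref{eq:exponential} and hence the lemma.
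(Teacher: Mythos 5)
Your treatment of \eqref{eq:sum}, \eqref{eq:product}, \eqref{eq:inequality-norm} and \eqref{eq:exponential} is sound and, for \eqref{eq:exponential}, essentially the paper's own argument: the paper likewise shifts by a multiple of the identity chosen so that the diagonal becomes nonnegative and then compares the exponential series term by term, obtaining $|\alpha I+\diag(A)+A_0|^k\leq[(\alpha I+\diag(A))+|A_0|]^k=[\alpha I+\psi(A)]^k$ directly, whereas you route through $|e^{A+\mu I}|\leq e^{|A+\mu I|}$ and then compare $|A+\mu I|$ with $\psi(A)+\mu I$; both rest on the same diagonal-shift idea, and your second inequality $e^{\psi(A)}\leq e^{|A|}$ is proved exactly as in the paper. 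One small slip in your left inequality: the bound you invoke, $|A_{ii}+\mu|\leq|A_{ii}|+\mu$, only yields $|A+\mu I|\leq|A|+\mu I$, which would give $e^{|A|}$ rather than $e^{\psi(A)}$ on the right; what you need, and what holds precisely because $\mu$ was chosen with $A_{ii}+\mu\geq 0$, is $|A_{ii}+\mu|=A_{ii}+\mu=[\psi(A)+\mu I]_{ii}$, so in fact $|A+\mu I|=\psi(A)+\mu I$. Note also that the paper does not prove \eqref{eq:sum}--\eqref{eq:equality-norm} at all; it cites them as straightforward facts, so your added detail is extra rather than divergent.

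The genuine gap is in your proof of \eqref{eq:equality-norm}. The direction $\|A\|_2\leq\||A|\|_2$ does follow from \eqref{eq:inequality-norm} with $B=|A|$, but your argument for the reverse direction is vacuous: applying \eqref{eq:inequality-norm} to the matrix $|A|$ with bound $B=|A|$ yields only the tautology $\||A|\|_2\leq\||A|\|_2$, not $\||A|\|_2\leq\|A\|_2$. No repair is possible, because for the induced $2$-norm (which is how the paper defines $\|\cdot\|_2$) the stated equality is false in general: for $A=\bigl[\begin{smallmatrix}1 & 1\\ -1 & 1\end{smallmatrix}\bigr]$ one has $A^\top A=2I$, so $\|A\|_2=\sqrt{2}$, while $|A|$ is the all-ones matrix, so $\||A|\|_2=2$. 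Only the one-sided bound $\|A\|_2\leq\||A|\|_2$ holds for general $A$, and that is in fact all the paper ever needs (it invokes \eqref{eq:equality-norm} only for entrywise-nonnegative matrices, where $|A|=A$ and the claim is trivial). You should either prove only the inequality $\|A\|_2\leq\||A|\|_2$ or restrict the equality to nonnegative $A$; as written, the step claiming "$\||A|\|_2\leq\|A\|_2$" fails.
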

%%%%
Indeed  $\psi(A)$ is the matrix obtained from $A$ by taking the absolute value of the off-diagonal elements and leaving entries on the main diagonal unchanged. Hence for any square real matrix $A$, $\psi(A)$ is a Metzler matrix (and if $A$ is itself a Metzler matrix, then $\psi(A)=A$).   The facts \eqref{eq:sum}-\eqref{eq:equality-norm}  which were stated in \cite[Chap. 8]{Horn85-Book} are straightforward to check. As to \eqref{eq:exponential}, it can be proved as follows.
%%%%%

\begin{proof}[Proof of Fact \eqref{eq:exponential}]
\textit{First inequality:} 
%\startmodif
Let us decompose the matrix $A$ in the form $A=\diag(A)+A_0$ where $\diag(A)$ is a diagonal matrix containing the diagonal elements of $A$ and $A_0$ is a matrix having the same off-diagonal elements as $A$ and zeros on the main diagonal. Then $\psi(A)=\diag(A)+|A_0|$.  Let $\alpha\in \Re$ be such that $\alpha I+\diag(A)\geq 0$. By definition of the exponential of a  matrix, we have
$$\begin{aligned}
	\left|e^{\alpha I+A}\right|&=\left|e^{\alpha I+\diag(A)+A_0} \right|\\ 
	&\leq \sum_{k=0}^{+\infty}\dfrac{1}{k!}\left|\alpha I+\diag(A)+A_0\right|^k\\
	&\leq \sum_{k=0}^{+\infty}\dfrac{1}{k!}\big[(\alpha I+\diag(A))+|A_0|\big]^k\\
	&= \sum_{k=0}^{+\infty}\dfrac{1}{k!}\big[\alpha I+\psi(A)\big]^k
	=e^{\alpha I+\psi(A)}
\end{aligned}$$
Note that here we have used the facts \eqref{eq:sum}-\eqref{eq:product} to derive the first and second inequalities. We have hence shown that 
$\left|e^{\alpha I+A}\right|\leq e^{\alpha I+\psi(A)}$ which is equivalent to $e^{\alpha}\left|e^{A}\right|\leq e^\alpha e^{\psi(A)}$ and the result follows. 
%\stopmodif
\\
\textit{Second inequality:} We use the identity $0\leq A\leq B \: \Rightarrow A^k\leq B^k$ for any integer $k\geq 1$. It follows, by invoking the definition of matrix exponential, that $0\leq A\leq B \: \Rightarrow e^A\leq e^B$. To prove the second inequality, note that by the fact that $\psi(A)$ is Metzler, there is $\lambda\in \Re$ such that $\lambda I+\psi(A)\geq 0$. On the other hand, $\psi(A)\leq |A|$. Combining these two observations leads to $0\leq \lambda I+\psi(A)\leq \lambda I +|A|$. And by applying the identity above, we get 
$e^{\lambda I+\psi(A)}\leq e^{\lambda I +|A|} $ which implies that $e^\lambda e^{\psi(A)}\leq e^\lambda e^{|A|}$ and finally that $ e^{\psi(A)}\leq e^{|A|}$. 
\end{proof}
\noindent An alternative proof of the first inequality of \eqref{eq:exponential} can be found in \cite{Hinrichsen07} 
where $\psi(A)$ is called the Metzler part of $A$.

%%%%%%%%%%%%%%%%
In order to reduce the complexity associated with the implementation of \eqref{eq:px-continuous-time}, we discuss three over-approximation methods. 
%%%%%
\subsubsection{Over-estimating $p_x$}
The following proposition allows to over-estimate $p_x$ with a  vector-valued signal $\hat{p}_x$ whose computation is cheaper. 
More specifically we can avoid numerical evaluation of integrals on unbounded time intervals thanks to the following proposition. 
%%%%%%
\begin{prop}\label{prop:Truncated-Time}
Let $T\in \Re_+$. Let $\hat{p}_x:\Re_+\rightarrow \Re_+^n$ be defined by: $\hat{p}_x(t)=p_x(t)$ for all $t\in \interval[open right]{0}{T}$ where $p_x$ is defined as in \eqref{eq:px-continuous-time}, and 
\begin{equation}\label{eq:pxhat}
	\hat{p}_x(t)=|e^{AT}|\hat{p}_x(t-T)+\int_{t-T}^t|e^{A(t-\tau)}B|p_w(\tau)d\tau
\end{equation}
for all $t\geq T$ with $A$ and $B$ being the matrices of system \eqref{eq:LTI} and $p_w$ as in Theorem \ref{thm:Tightest-Estimator}. Then $p_x(t)\leq \hat{p}_x(t)\:  \forall t\in \Re_+$ and hence the state trajectories generated by system \eqref{eq:LTI} satisfy
\begin{equation}\label{eq:Enclosing-pxhat}
	c_x(t)-\hat{p}_x(t)\leq x(t)\leq c_x(t)+\hat{p}_x(t) \: \forall t\in \Re_+.
\end{equation}
with $c_x$ defined as in \eqref{eq:cx-continuous-time}. 
\end{prop}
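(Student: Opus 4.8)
The plan is to prove the inequality $p_x(t) \le \hat p_x(t)$ for all $t \in \Re_+$ by induction over the intervals $[kT, (k+1)T)$, $k = 0, 1, 2, \ldots$, and then deduce \eqref{eq:Enclosing-pxhat} by combining this bound with Theorem \ref{thm:Tightest-Estimator}. For the base case $t \in [0, T)$ there is nothing to prove, since $\hat p_x(t) = p_x(t)$ there by definition. For the inductive step, I would fix $t \ge T$ and first establish a \emph{semigroup-type splitting} of the exact formula \eqref{eq:px-continuous-time} at the time instant $t - T$: writing $e^{At} = e^{A(t-T)} \cdot e^{A(t-T)}$... more carefully, $e^{At} = e^{AT} e^{A(t-T)}$ and $e^{A(t-\tau)} = e^{AT} e^{A(t-T-\tau)}$ for $\tau \le t - T$, one gets
\begin{equation*}
p_x(t) = \big|e^{AT} e^{A(t-T)}\big| p_x(0) + \int_0^{t-T} \big|e^{AT} e^{A(t-T-\tau)} B\big| p_w(\tau)\, d\tau + \int_{t-T}^t \big|e^{A(t-\tau)} B\big| p_w(\tau)\, d\tau.
\end{equation*}

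The key inequality is then the submultiplicativity of the entrywise absolute value, Fact \eqref{eq:product} of Lemma \ref{lem:Inequalities}, which gives $|e^{AT} e^{A(t-T)}| \le |e^{AT}|\,|e^{A(t-T)}|$ and $|e^{AT} e^{A(t-T-\tau)} B| \le |e^{AT}|\,|e^{A(t-T-\tau)} B|$. Since $|e^{AT}| \ge 0$ entrywise and $p_x(0), p_w(\tau) \ge 0$, these bounds may be pulled through the first two terms and the $|e^{AT}|$ factored out, yielding
\begin{equation*}
p_x(t) \le |e^{AT}| \left( \big|e^{A(t-T)}\big| p_x(0) + \int_0^{t-T} \big|e^{A(t-T-\tau)} B\big| p_w(\tau)\, d\tau \right) + \int_{t-T}^t \big|e^{A(t-\tau)}B\big| p_w(\tau)\, d\tau.
\end{equation*}
The parenthesized quantity is exactly $p_x(t-T)$ by \eqref{eq:px-continuous-time} (after the change of variable $\sigma = \tau$ in the integral, noting the integrand matches), so $p_x(t) \le |e^{AT}| p_x(t-T) + \int_{t-T}^t |e^{A(t-\tau)}B| p_w(\tau)\, d\tau$. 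By the induction hypothesis $p_x(t-T) \le \hat p_x(t-T)$, and again using $|e^{AT}| \ge 0$, the right-hand side is bounded by $|e^{AT}| \hat p_x(t-T) + \int_{t-T}^t |e^{A(t-\tau)}B| p_w(\tau)\, d\tau = \hat p_x(t)$ by the defining recursion \eqref{eq:pxhat}. This closes the induction. Finally, \eqref{eq:Enclosing-pxhat} follows because Theorem \ref{thm:Tightest-Estimator} gives $c_x(t) - p_x(t) \le x(t) \le c_x(t) + p_x(t)$ for every trajectory, and $p_x(t) \le \hat p_x(t)$ widens this interval to $[c_x(t) - \hat p_x(t), c_x(t) + \hat p_x(t)]$.

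I expect the main obstacle to be purely bookkeeping: getting the change of variables in the integrals right so that the ``tail'' term of the split exactly reproduces $p_x(t-T)$, and being careful that all the matrices being factored out ($|e^{AT}|$) are nonnegative so that the entrywise inequalities are preserved under multiplication — this nonnegativity is what makes the monotonicity argument go through and is the only place subtlety can creep in. No deep idea is needed beyond Fact \eqref{eq:product}; the result is essentially the statement that replacing an integral over $[0,t]$ by a one-step-memory recursion with window length $T$ can only enlarge the radius, because submultiplicativity of $|\cdot|$ loses information at each splitting point.
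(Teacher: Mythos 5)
Your proof is correct and follows essentially the same route as the paper's: both arguments hinge on splitting the exact radius formula at multiples of $T$ via the semigroup property of $e^{At}$ and then invoking the entrywise submultiplicativity \eqref{eq:product} at each splitting point. The only cosmetic differences are that you run a one-step induction over the windows $[kT,(k+1)T)$ where the paper instead unrolls the recursion \eqref{eq:pxhat} completely down to the remainder $r(t)=t-q(t)T$ and compares term by term, and that you deduce \eqref{eq:Enclosing-pxhat} from Theorem \ref{thm:Tightest-Estimator} together with the monotonicity of the interval in its radius rather than re-applying Lemma \ref{lem:Tightest-Interval} to the time-shifted solution formula --- both steps are equally valid.
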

%%%%%%%%%%%%%
\begin{proof}
The solution to \eqref{eq:LTI} can be written as 
$$x(t)=e^{AT}x(t-T)+\int_{t-T}^te^{A(t-\tau)}Bw(\tau)d\tau.  $$
Now, by applying Lemma \ref{lem:Tightest-Interval}, it is immediate that \eqref{eq:Enclosing-pxhat} holds if we can establish that $\mathcal{C}(c_x(t-T),\hat{p}_x(t-T))$ is an enclosing interval for $x(t-T)$. This in turn is true if  ${p}_x(t-T)\leq \hat{p}_x(t-T)$ for all $t$.  
Hence let us show that $p_x(t)\leq \hat{p}_x(t)$ for all $t$.  For this purpose, write $t$ in the form $t=q(t)T+r(t)$ where $q(t)$ is a  nonnegative integer and $r(t)\in \Re_+$ with  $0\leq r(t)< T$. Then by applying repeatedly \eqref{eq:pxhat} leads to
$$\begin{aligned}
	\hat{p}_x&(t)=|e^{AT}|^{q(t)} \hat{p}_x(r(t))+\\
	&+\sum_{j=1}^{q(t)}\int_{t-jT}^{t-(j-1)T}\big|e^{AT}\big|^{(j-1)}\big| e^{A(t-(j-1)T-\tau)}B\big|p_w(\tau)d\tau
\end{aligned}$$
By applying  \eqref{eq:product}, we see that 
$$\begin{aligned}
	\hat{p}_x(t)\geq &|e^{ATq(t)}| \hat{p}_x(r(t))+\int_{t-q(t)T}^{t}\big| e^{A(t-\tau)}B\big|p_w(\tau)d\tau
\end{aligned}$$
On the other hand $\hat{p}_x(r(t))={p}_x(r(t))=|e^{Ar(t)}|p_x(0)+\int_{0}^{r(t)}\big|e^{A(t-\tau)}B\big|p_w(\tau)d\tau$. Plugging this in the last inequality above and applying again \eqref{eq:product} show that $\hat{p}_x(t)\geq p_x(t)$. 
%%%%%%
\end{proof}
Note that if $A$ is Hurwitz, then $T$ can be chosen sufficiently large so that $|e^{AT}|$ is Schur stable\footnote{i.e., its spectral radius is less than $1$. }. For such a $T$,  $(c_x,\hat{p}_x)$ defines an interval estimator for system \eqref{eq:LTI} in the sense of Definition \ref{def:Interval-Estimator}. As shown by Proposition \ref{prop:Truncated-Time}, the interval estimate defined by $(c_x,\hat{p}_x)$ is only an over-estimate of the one resulting from $(c_x,p_x)$.  As $T$ gets larger, the two interval estimators will get closer but then the complexity increases. And in the extreme case where $T=t$, we recover $\hat{p}_x=p_x$. 

\subsubsection{Approximation using a Metzler matrix}\label{subsubsec:Metzler}
%%%%%%%%%%%%%%%%%%%
\noindent A second simple approximation can be obtained directly from Lemma \ref{lem:Inequalities}. 
In effect, by applying the facts \eqref{eq:sum}-\eqref{eq:exponential} above, we can write $p_x(t)\leq \check{p}_x(t)$ where 
\begin{equation}\label{eq:phat}
	\check{p}_x(t)\triangleq e^{\psi(A)t}p_x(0)+\int_{0}^te^{\psi(A)(t-\tau)}|B|p_w(\tau)d\tau.
\end{equation}
Although this is a looser estimate of $p_x$ (than e.g., \eqref{eq:pxhat}) its benefit lies in the fact that it is easier to compute. In effect, the new signal $\check{p}_x$ can be realized very simply in the form
$\dot{\check{p}}_x(t)=\psi(A)\check{p}_x(t)+|B|p_w(t)$ with $\check{p}_x(0)=p_x(0)$. However for $(c_x,\check{p}_x)$ to be an interval estimator in the sense of Definition \ref{def:Interval-Estimator}, we must require additionally that $\psi(A)$ be Hurwitz.

\subsubsection{Over-estimating $p_w$ by a constant vector}\label{subsubsec:pw}
Another over-estimate of $p_x$ can be obtained from Proposition \ref{eq:px-for-constant-pw} as follows. By Assumption \ref{assum:Bounding}, $p_w$ is bounded. Therefore, let $\delta^o$ be the vector in $\Re^{n_w}$ whose $i$-th   entry $\delta_i^o$ is defined by $\delta_i^o=\sup_{t\in \Re_+}p_{w,i}(t)$ where $p_{w,i}(t)$ refers to the $i$-th entry of $p_w(t)$. Then by letting $\delta$ be a signal defined by  $\delta(t)=\delta^o$ for all $t\geq 0$,  $w$ satisfies $c_w(t)-\delta(t)\leq w(t)\leq c_w(t)+\delta(t)$ and hence $(c_w,\delta)$ is a valid interval representation for the input signal $w$ which fulfills the condition of Proposition \ref{eq:px-for-constant-pw}.  As a consequence, replacing $p_w(0)$ in \eqref{eq:realization-px} with $\delta^o$ gives a computable realization of an interval estimator for the state of system \eqref{eq:LTI}. \\
For an empirical comparison of the estimators  discussed here, see Section \ref{sec:Simulations}. 

\section{Closed-loop state estimator for LTI systems}\label{sec:Closed-Loop}
In case the system \eqref{eq:LTI} is not stable, let us assume it to be observable (or just detectable). Then it is possible to find a matrix gain $L$ such that $A-LC$ is Hurwitz. We can then construct an interval observer from the classical observer form. As we did in open-loop, we can of course write the best estimator \eqref{eq:IntEst}-\eqref{eq:cx-continuous-time} also in closed-loop for a given $L$ or compute its over-approximations discussed in Section \ref{subsec:Approximation}. However here we choose to study further the type of approximation given in \eqref{eq:phat}. Although this type of estimator is not the tightest one,  it has the advantage of computational simplicity.

%%%%%%%%%%%%%%% 
\subsection{A systematic design method}
In this section we discuss a systematic way of constructing interval observers  employing an output injection.
Departing from the structure of the classical Luenberger observer, it is   easy to see that the  state of system \eqref{eq:LTI} satisfies 
\begin{equation}\label{eq:observer}
	\dot{x}(t) =(A-LC)x(t)+Gs(t),
\end{equation} 
where  $$G=\bbm B & L & -L\eem \mbox{ and } s(t) = \bbm  w(t)^\top  & y(t)^\top & v(t)^\top\eem^\top$$ 
with $L$ being the gain of the observer. 
Then by relying on the discussion of Section \ref{subsubsec:Metzler}, we can construct an enclosing interval estimate $(c_x^{\mbox{\tiny CL}},p_x^{\mbox{\tiny CL}})$ for the state of system \eqref{eq:LTI} by
\begin{equation}
	\label{eq:Interval-Luenberger}
\hspace{-7pt}\left\{	\begin{aligned}
	& \dot{c}_x^{\mbox{\tiny CL}}=(A-LC)c_x^{\mbox{\tiny CL}}(t)+Gc_s(t), \: c_x^{\mbox{\tiny CL}}(0)=c_x(0)\\
	&\dot{p}_x^{\mbox{\tiny CL}}=\psi(A-LC){p}_x^{\mbox{\tiny CL}}(t)+\left|G\right|p_s(t), \: p_x^{\mbox{\tiny CL}}(0)=p_x(0),
	\end{aligned}\right. 
\end{equation}
%%%%%
where $(c_s(t), p_s(t))\in \Re^{n_s}\times\Re_+^{n_s}$, $n_s=n_w+2n_y$, is a center-radius representation of $s(t)$.  
The systems \eqref{eq:Interval-Luenberger}  yield an interval observer for system \eqref{eq:LTI} provided that both $A-LC$  and $\psi(A-LC)$ are Hurwitz. 
By the statement (a) of Lemma \ref{lem:Stability} stated below, this stability condition is satisfied if and only if $\psi\left(A-LC\right)$ is Hurwitz. 
%%%%%%%%%%%%%%%%%%%%%%%%%%
\begin{lem}\label{lem:Stability}
Let $A, A_1, A_2\in \Re^{n\times n}$ and $\mathcal{P}\in \Re_+^{n\times n}$. Let $\psi$ be the matrix  function defined in Lemma  \ref{lem:Inequalities}. Then the following implications hold: 
\begin{itemize}
	\item[(a)] $\psi(A)$ is Hurwitz $\Rightarrow$ $A$ is Hurwitz.
	\item[(b)]  $\psi(A)+\mathcal{P}$ is Hurwitz $\Rightarrow$ $\psi(A)$ is Hurwitz
	\item[(c)] $\psi(A_1)\leq \psi(A_2)$ $\Rightarrow$ $0\leq e^{\psi(A_1)}\leq e^{\psi(A_2)}$
	\item[(d)] If $\psi(A_1)\leq \psi(A_2)$, then  $\psi(A_1)$ is Hurwitz whenever $\psi(A_2)$ is Hurwitz 
\end{itemize}
Here, all matrix inequalities with symbol $\leq$ are understood entrywise. 
\end{lem}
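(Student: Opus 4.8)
The plan is to prove the four items in an order that exploits their logical dependencies, using the classical theory of Metzler matrices together with the inequalities collected in Lemma \ref{lem:Inequalities}. The central tool throughout is the fact that for a Metzler matrix $M$, one has $e^{Mt}\geq 0$ for all $t\geq 0$, and that a Metzler matrix $M$ is Hurwitz if and only if there exists a vector $\eta>0$ (componentwise) with $M\eta<0$ — equivalently, $-M^{-1}\geq 0$ exists — which is the Metzler/Hurwitz characterization from nonnegative matrix theory. Note that $\psi(A)$, $\psi(A_1)$, $\psi(A_2)$ are all Metzler by construction, and $\psi(A)+\mathcal P$ is Metzler as well since adding the nonnegative matrix $\mathcal P$ preserves nonnegativity of off-diagonal entries.

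First I would prove (c), since it is purely about monotonicity of the exponential and feeds into (d). From $\psi(A_1)\leq\psi(A_2)$ pick $\lambda\in\Re$ large enough that $\lambda I+\psi(A_1)\geq 0$; then $0\leq \lambda I+\psi(A_1)\leq \lambda I+\psi(A_2)$, and by the identity $0\leq X\leq Y\Rightarrow X^k\leq Y^k$ (already used in the proof of Fact \eqref{eq:exponential}) and the series definition of the exponential, $e^{\lambda I+\psi(A_1)}\leq e^{\lambda I+\psi(A_2)}$; cancelling $e^{\lambda}$ gives $0\leq e^{\psi(A_1)}\leq e^{\psi(A_2)}$. The same argument applied to $t\psi(A_1)\leq t\psi(A_2)$ for every $t\geq 0$ gives $0\leq e^{\psi(A_1)t}\leq e^{\psi(A_2)t}$, which is the form needed next. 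For (d), recall that a Metzler matrix $M$ is Hurwitz iff $e^{Mt}\to 0$ as $t\to\infty$; if $\psi(A_2)$ is Hurwitz then $e^{\psi(A_2)t}\to 0$, and by the sandwich $0\leq e^{\psi(A_1)t}\leq e^{\psi(A_2)t}$ we get $e^{\psi(A_1)t}\to 0$ entrywise, hence $\psi(A_1)$ is Hurwitz. Item (b) is then the special case $A_1$ with $\psi(A_1)=\psi(A)$ and $A_2$ with $\psi(A_2)=\psi(A)+\mathcal P$: since $\psi(A)\leq\psi(A)+\mathcal P=\psi\big(\psi(A)+\mathcal P\big)$ (the latter already being Metzler, so $\psi$ acts as the identity on it), (d) gives that $\psi(A)$ is Hurwitz whenever $\psi(A)+\mathcal P$ is.

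For (a), I would use Fact \eqref{eq:exponential}, namely $|e^{At}|\leq e^{\psi(A)t}$ for all $t\geq 0$ (applying that fact to $tA$, whose Metzler part is $t\psi(A)$). If $\psi(A)$ is Hurwitz, then $e^{\psi(A)t}\to 0$ entrywise, so $|e^{At}|\to 0$, hence $e^{At}\to 0$, which means every eigenvalue of $A$ has negative real part, i.e. $A$ is Hurwitz. Alternatively, one may invoke a spectral-abscissa argument: for any square $A$, $\mu(A)\leq\mu(\psi(A))$ where $\mu$ denotes the spectral abscissa (largest real part of an eigenvalue), because $\rho(e^{At})=e^{t\mu(A)}$ and $\rho(e^{At})\leq\rho(|e^{At}|)\leq\rho(e^{\psi(A)t})=e^{t\mu(\psi(A))}$ using Perron–Frobenius monotonicity of the spectral radius on nonnegative matrices together with $|e^{At}|\leq e^{\psi(A)t}$; letting $t\to\infty$ forces $\mu(A)\leq\mu(\psi(A))<0$.

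I do not anticipate a genuine obstacle here — each item reduces to a monotonicity property of the matrix exponential on the Metzler cone plus the exponential-comparison Fact \eqref{eq:exponential}. The one point requiring mild care is the Metzler-specific equivalence "\,$M$ Hurwitz $\iff e^{Mt}\to 0$\," versus using it for general (non-Metzler) matrices: for $A$ itself in part (a) I only use the easy direction ($e^{At}\to 0 \Rightarrow A$ Hurwitz), which holds for any square matrix, so no Metzler structure is needed on $A$. The only subtlety worth stating explicitly is that $\psi$ is idempotent on Metzler matrices, so that $\psi(A)+\mathcal P$ really is its own Metzler part, legitimizing the reduction of (b) to (d).
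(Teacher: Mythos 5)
Your proof is correct and follows essentially the same route as the paper's: the $\alpha I$-shift reducing entrywise monotonicity of the matrix exponential to monotonicity of powers of nonnegative matrices, combined with Fact \eqref{eq:exponential} for item (a). The only cosmetic differences are that you extract Hurwitz stability from the entrywise limit $e^{Mt}\to 0$ rather than from the norm facts \eqref{eq:inequality-norm}--\eqref{eq:equality-norm} used in the paper, and that you obtain (b) as a corollary of (d) whereas the paper proves it directly by the same shift argument.
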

%%%%
\begin{proof}\textit{Proof of (a):}
As a starting point of the proof of item (i), note that a matrix $X\in \Re^{n\times n}$ is Hurwitz if and only if there exist some positive constants $c$ and $\lambda$ such that $\|e^{Xt}\|_2<ce^{-\lambda t}$ for all $t\in \Re_+$. By applying \eqref{eq:inequality-norm}-\eqref{eq:exponential}, we can write 
$$  \left\|e^{At}\right\|_2\leq \big\|e^{\psi(A)t}\big\|_2$$ %\leq \big\|e^{|A|t}\big\|_2 $$
for all $t\geq 0$. 
Hence, if  $\psi(A)$ is Hurwitz then so is $A$. \\
%%%
\textit{Proof of (b): } Since $\psi(A)$ is a Metzler matrix, we can find a number $\alpha\geq 0$ such that $0\leq\alpha I+\psi(A)\leq \alpha I+\psi(A)+\mathcal{P}$.  It follows that 
$0\leq e^{\alpha I+\psi(A)}\leq e^{\alpha I+\psi(A)+\mathcal{P}}$ and hence that $0\leq e^{\psi(A)}\leq e^{\psi(A)+\mathcal{P}}$. Applying property \eqref{eq:equality-norm} then  yields $\big\|e^{\psi(A)}\big\|_2\leq \big\|e^{\psi(A)+\mathcal{P}}\big\|_2$ from which the conclusion follows. \\
%%%%
\textit{Proofs of (c) and (d): } (c) follows by a similar reasoning as in the proof of item (b). 
As to the statement (d), it follows from (c) and \eqref{eq:equality-norm}. 
\end{proof}

\noindent The question now is how to effectively select a matrix gain $L\in \Re^{n\times n_y}$ so as to realize the condition $\psi\left(A-LC\right)$ is Hurwitz. An answer is provided by the following lemma. 
%%%%%%%%%%%%%%%%%%%%%%%%%%%%%%
\begin{lem}\label{lem:LMI}
Let $(A,C)\in \Re^{n\times n}\times \Re^{n_y\times n}$. Then the following statements are equivalent:
\begin{enumerate}
	\item[(e)] There exists $L\in \Re^{n\times n_y}$  such that $\psi(A-LC)$ is Hurwitz.   
\item[(f)] There exist a \textit{diagonal positive definite} matrix $P\in \Re^{n\times n}$  and some matrices $Y\in \Re^{n\times n_y}$, $X\in \Re^{n\times n}$ satisfying the conditions:
\begin{equation}\label{eq:LMI}
	\begin{aligned}
		&X^\top +X+2\diag(S)\prec 0 \\
		& |S-\diag(S)|\leq X
		%& \hspace{-15pt}-\big(S-\diag(S)\big)\leq X
		%& p>0
	\end{aligned}
\end{equation}
\end{enumerate}
where $S = PA-YC$. 
In case the statements hold, $L$ is given by $L=P^{-1}Y$. 
\end{lem}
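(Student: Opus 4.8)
The goal is to show the equivalence of (e) and (f). The natural route is to interpret "\(\psi(M)\) is Hurwitz" for a Metzler-type matrix \(M\) as a Lyapunov condition that can be made into an LMI by the usual \(P\), \(Y=PL\) change of variables, together with the fact that for Metzler matrices one may seek a \emph{diagonal} Lyapunov certificate. So the plan is: first, recall the classical fact that a Metzler matrix \(\psi(A-LC)\) is Hurwitz if and only if there is a diagonal \(P\succ0\) with \(P\,\psi(A-LC)+\psi(A-LC)^\top P\prec0\) (diagonal stability of Metzler/Hurwitz matrices — this is where the diagonal restriction on \(P\) in (f) comes from). Since \(P\) is diagonal, \(P\,\psi(A-LC)=\psi(PA-PLC)\) entrywise up to the diagonal scaling, and crucially \(\psi(PA-YC)\) with \(Y=PL\) is exactly obtained from \(PA-YC\) by taking absolute values off the diagonal; so setting \(S=PA-YC\) we have \(\psi(S)=\diag(S)+|S-\diag(S)|\). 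The condition \(P\psi(A-LC)+\psi(A-LC)^\top P\prec 0\) then becomes \(\psi(S)+\psi(S)^\top\prec 0\), i.e. \(2\diag(S)+|S-\diag(S)|+|S-\diag(S)|^\top\prec0\).

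Second, I would linearize this. The matrix \(|S-\diag(S)|+|S-\diag(S)|^\top\) is not linear in \((P,Y)\), so introduce the slack variable \(X\in\Re^{n\times n}\) that dominates it. Concretely, I claim \(2\diag(S)+|S-\diag(S)|+|S-\diag(S)|^\top\prec0\) holds for some \((P,Y)\) if and only if there exist \((P,Y,X)\) with \(|S-\diag(S)|\le X\) (entrywise) and \(X+X^\top+2\diag(S)\prec0\) — which is precisely \eqref{eq:LMI}. The forward direction takes \(X=|S-\diag(S)|\); the reverse direction uses that \(|S-\diag(S)|\le X\) with both sides entrywise nonnegative off the diagonal forces \(|S-\diag(S)|+|S-\diag(S)|^\top \le X+X^\top\) entrywise, and then (since all these matrices agree that adding a nonnegative symmetric perturbation to a negative definite symmetric matrix... wait) — more carefully, I would use that if \(0\le N\le X\) entrywise for symmetric nonnegative \(N\) and if \(X+X^\top+D\prec0\) with \(D=2\diag(S)\) diagonal, then \(N+D\preceq ? \) — this needs the monotonicity \(\|N\|_2\le\||N|\|_2\le\|X+X^\top\|_2\)-type bound from Lemma \ref{lem:Inequalities}, applied to deduce definiteness is preserved. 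That is the one place requiring a small argument rather than bookkeeping.

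Third, assemble: (e) \(\Leftrightarrow\) \(\psi(A-LC)\) Hurwitz \(\Leftrightarrow\) \(\exists\) diagonal \(P\succ0\) with \(\psi(S)+\psi(S)^\top\prec0\) where \(S=PA-PLC\) \(\Leftrightarrow\) \eqref{eq:LMI} has a solution with \(Y=PL\); and conversely from a solution \((P,Y,X)\) of \eqref{eq:LMI} recover \(L=P^{-1}Y\) and run the chain backward. I would state the recovery formula \(L=P^{-1}Y\) explicitly at the end, noting \(P\) is invertible since \(P\succ0\).

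\textbf{Main obstacle.} The substantive step is the diagonal-Lyapunov characterization: showing that a Hurwitz Metzler matrix admits a \emph{diagonal} positive definite Lyapunov solution (and conversely), which is what legitimizes restricting \(P\) to be diagonal in (f) without losing any gains \(L\). This is a known result for Metzler matrices, but it is the crux; everything else (the \(Y=PL\) change of variables, the slack variable \(X\), and the entrywise-domination argument turning \(|S-\diag(S)|+|S-\diag(S)|^\top\) into \(X+X^\top\)) is routine given Lemma \ref{lem:Inequalities} and Lemma \ref{lem:Stability}. I would also double-check the edge case that \(\diag(S)=P\diag(A)\) carries the sign information correctly, since \(\diag(A)\) need not be negative even when \(\psi(A-LC)\) is Hurwitz.
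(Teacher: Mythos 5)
Your proposal follows the paper's proof essentially step for step: the diagonal-Lyapunov characterization of Hurwitz Metzler matrices (the paper cites Theorem 15 of Farina--Rinaldi for this crux), the change of variables $Y=PL$, $S=PA-YC$ with $P\lvert M\rvert=\lvert PM\rvert$ for diagonal $P\succ 0$ reducing (e) to $\psi(S)+\psi(S^\top)\prec 0$, the choice $X=\lvert S-\diag(S)\rvert$ for (e)$\Rightarrow$(f), and entrywise domination for the converse. The one step you leave vague --- passing from $\lvert S-\diag(S)\rvert\le X$ and $X^\top+X+2\diag(S)\prec 0$ back to $\psi(S)+\psi(S^\top)\prec 0$ --- is settled in the paper by item (d) of Lemma \ref{lem:Stability} (entrywise comparison of symmetric Metzler matrices preserves Hurwitzness, hence negative definiteness), and note that a raw $2$-norm bound on $\lvert S-\diag(S)\rvert$ alone would not suffice here: the Metzler structure is what makes the comparison legitimate.
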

\begin{proof}
Let $\bar{A}=A-LC$.  Since $\psi(\bar{A})$ is a Metzler matrix, we can apply Theorem 15 in \cite[p. 41]{Farina00-Book} to state that $\psi\left(\bar{A}\right)$ is Hurwitz if and only if 
there exists a diagonal and positive definite matrix $P$ such that
$$\psi(\bar{A})^\top P+ P\psi(\bar{A})\prec 0. $$
Observe that $\psi(\bar{A})=\diag(\bar{A})+\left|\bar{A}-\diag(\bar{A})\right|$ with $\diag(\bar{A})$ being the matrix formed with the diagonal entries of $\bar{A}$. Moreover since $P$ is a diagonal matrix with strictly positive entries, $P|S|=|PS|$ for any $S\in\Re^{n\times n}$.  Using these two remarks we can write the above stability condition as 
$$\begin{aligned}
	2P\diag(\bar{A})&+|PA-YC-P\diag(\bar{A})|^\top \\
	&\qquad + |PA-YC-P\diag(\bar{A})|\prec 0,
\end{aligned}  $$
where $Y=PL$. 
By letting now $S=PA-YC$ and noting that $P\diag(\bar{A})=\diag(S)$,  
we get that condition (e) in the lemma is equivalent to 
\begin{equation}\label{eq:equi-(e)}
	2\diag(S)+|S-\diag(S)|^\top + |S-\diag(S)|\prec 0.
\end{equation}
(i.e., condition (e) is equivalent to $\psi(S)+\psi(S^\top)\prec 0$). 
Hence by setting $X= |S-\diag(S)|$ we see that condition (e) implies condition (f).  \\ 
Now assume that the condition (f)  is satisfied. Then $\psi(S)-\diag(S)=|S-\diag(S)|\leq X$ and hence 
$$\psi(S)+\psi(S^\top)\leq X+X^\top +2\diag(S).$$
 Note that the terms on both sides of this inequality are symmetric and Metzler. On the other hand, the first inequality of \eqref{eq:LMI} implies  that $X+X^\top +2\diag(S)$ is Hurwitz. Therefore we can apply the statement  (d) of Lemma \ref{lem:Stability} to conclude that $\psi(S)+\psi(S^\top)$ is also Hurwitz. Now by the fact that (e) is equivalent to $\psi(S)+\psi(S^\top)\prec 0$  (see \eqref{eq:equi-(e)}) we find that (f) implies (e). 
 This concludes the proof. 
\end{proof}
%%%%%%%%%%%%%%%%%%%%%%%%%%%%%%%
%%%%%%%%%%%%%%%%%%%%%%%
\noindent Lemma \ref{lem:LMI} shows that one can compute the observer gain $L$ efficiently by solving a feasibility problem which is expressible in terms of Linear Matrix Inequalities (LMI) \cite{Boyd97-LMI-Book}. In comparison to classical results we do not require $A-LC$ to be Metzler since $\psi(A-LC)$ is naturally Metzler. Hence the only constraint associated with the search for the gain $L$ is the Hurwitz stability of $\psi(A-LC)$.

%%%%%%%%%%%%%%%%%%%%%%%%%%%%%%
\begin{rem}
In addition to ensuring stability, the gain $L$ could be designed so as to guarantee a certain level of convergence speed. For that it suffices to replace the first equation of \eqref{eq:LMI} with $X^\top +X+2\diag(S)\prec -\alpha P$ with $\alpha>0$ a predefined level of decay and $P$ being the diagonal positive definite matrix of Lemma \ref{lem:LMI}. Also it can be of interest, similarly as in \cite{Rami08-CDC}, to select the matrix $L$ so that to minimize a performance index of the form 
$$\int_{0}^\infty \phi\left( p_x^{\mbox{\tiny CL}}(\tau)\right) d\tau $$ 
subject to the condition of Lemma \ref{lem:LMI}, with $\phi$ being some cost function.  
\end{rem}
%%%%%%%%%%%%%%%%%%%%%%%%%%%%
\vspace{-.3cm}
%%%%%%%%%%%%%%%%%%%%%%%%%%%%%%%%%%%%%%
\section{Numerical results}\label{sec:Simulations}
This section reports some simulation results that illustrate the performances of some of the interval estimators discussed in this paper. For concision, we just consider the open-loop configuration. 
Consider an instance of system \eqref{eq:LTI} with fixed-values state transition matrices  defined by 
\begin{equation}\label{eq:example}
	A=\BM -3 & 1.5\\ -2  & -2\EM \: \:  \mbox{ and }\: \: B=\BM -1\\ 0\EM.
\end{equation}
The input $w$ is such that $w(t)\in \mathcal{C}\big(c_w(t),p_w(t)\big)$ for all $t$ where
$c_w(t)=5\sin(2\pi\nu_c t)$ and $p_w(t)=\left|2\sin(2\pi\nu_p t)\right|$ with $\nu_c = 0.3$  and $\nu_p=50$. 
As to the initial state, it lives in an interval $\mathcal{C}\big(c_x(0),p_x(0)\big)$ with 
$c_x(0)=\bbm -2 & 2\eem^\top$, $p_x(0)=\bbm 3 & 2.2\eem^\top$. Note that in order to be able to test all the estimators in open-loop (in particular the one suggested in Section \ref{subsubsec:Metzler}), the matrix $A$ in \eqref{eq:example} has been selected such that $\psi(A)$ is Hurwitz. 

For this example, Figure \ref{fig:state} compares the tightest estimator proposed in \eqref{eq:IntEst}-\eqref{eq:px-continuous-time} with three estimators from the family described in Eqs \eqref{eq:pxhat}-\eqref{eq:Enclosing-pxhat} for $T\in \left\{0.01, 0.1, 1\right\}$. Two comments can be made. First, these simulation results provide an empirical evidence supporting our claim that the estimator proposed in \eqref{eq:IntEst}-\eqref{eq:px-continuous-time} is indeed the tightest possible. Second, the over-approximation given in \eqref{eq:pxhat}-\eqref{eq:Enclosing-pxhat} gets tighter as the horizon $T$ increases. Finally, it is interesting to observe that $T$ needs not be too large for $\hat{p}_x$ in \eqref{eq:pxhat} to provide a good approximation of $p_x$; here we get a good match between $p_x$ and $\hat{p}_x$ for a value as small as $T=1$.  
%%%

The second figure (Fig. \ref{fig:state2}) compares the estimator \eqref{eq:IntEst}-\eqref{eq:px-continuous-time} to its over-approximations discussed in Sections \ref{subsubsec:Metzler} and \ref{subsubsec:pw}. A specificity of these estimators is that they are computationally less expensive to implement as they can be realized by finite dimensional  state-space representations (with state lengths equal to $2n$ and $n(n+1)$ respectively). It follows from the empirical results that in the current settings, the over-approximation using the Metzler matrix $\psi(A)$ is the cheapest but also the least tight. 
 %
%%%%%%%%%%%%%%%%%%%%%%%%%%%%%%
\begin{figure}
	\centering
	\psfrag{t}[][]{\tiny time [s]}
	\psfrag{x}[][]{\scriptsize $x_1$} %
	\psfrag{Tsim}{\tiny Tightest} 
	\psfrag{T1}{\tiny $T=1$} 
	\psfrag{T01}{\tiny $T=0.1$}
	\psfrag{T001}{\tiny $T=0.01$} 
\includegraphics[width=8cm,height=5cm]{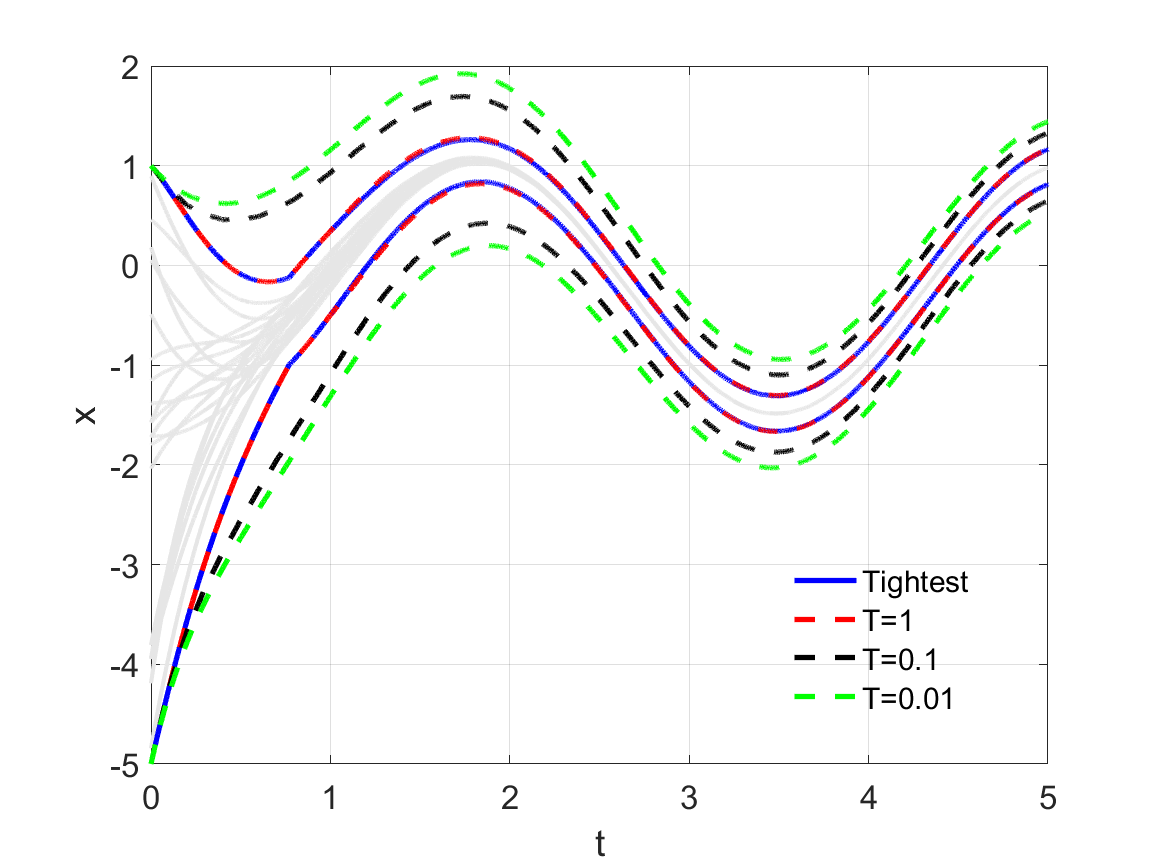}
\psfrag{x}[][]{\scriptsize $x_2$}
\includegraphics[width=8cm,height=5cm]{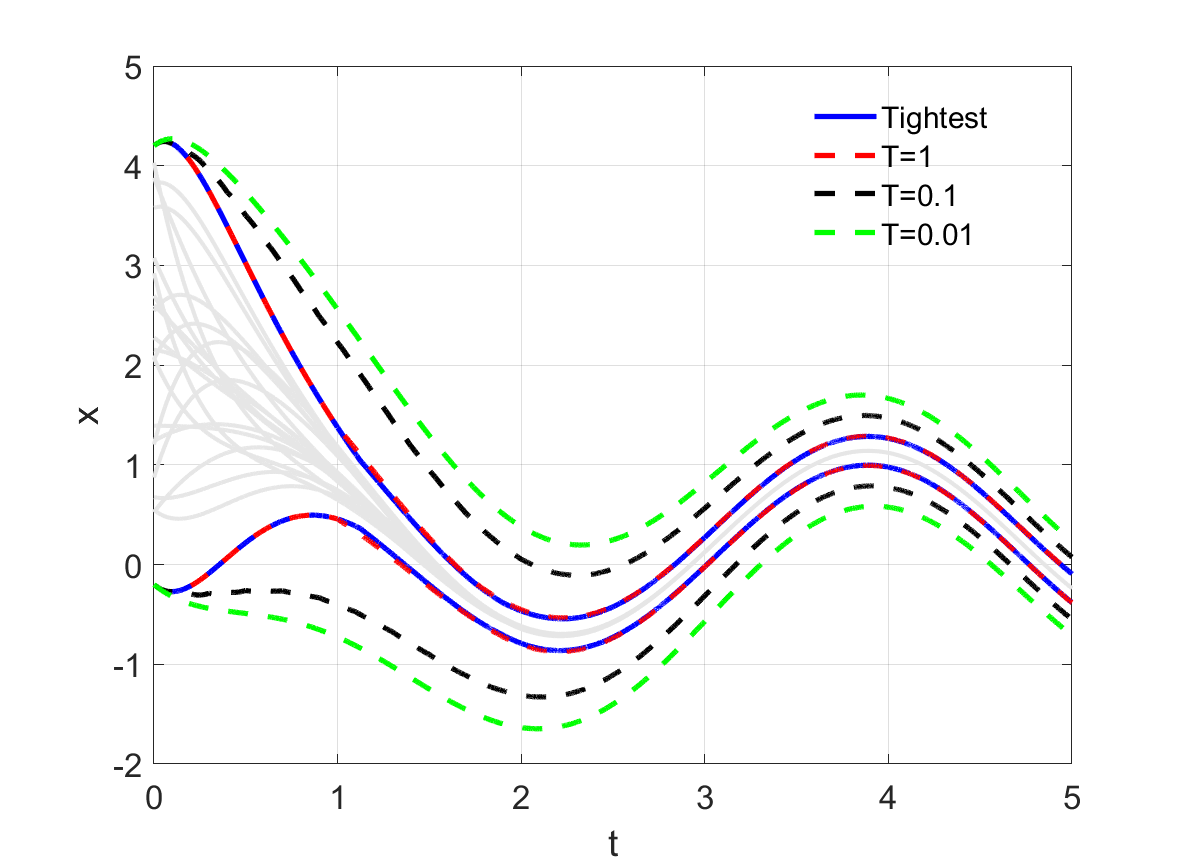}
\caption{Comparison of open-loop interval estimators \eqref{eq:pxhat}-\eqref{eq:Enclosing-pxhat} for $T=0.01$ (green), $T=0.1$ (black), $T=1$ (red), $T=t$ (blue). In gray are represented the state trajectories of the system generated from different initial conditions and different inputs with values on the allowed intervals.  }
	\label{fig:state}
\end{figure}
%%%%%%%%%%%%%%%%%%%%%%%%%%%%%%%%%%%
\begin{figure}
	\centering
	\psfrag{t}[][]{\tiny time [s]}
	\psfrag{x}[][]{\scriptsize $x_1$} %
	\psfrag{Tsim}{\tiny Tightest} 
	\psfrag{T1}{\tiny $T=1$} 
	\psfrag{T01}{\tiny $T=0.1$}
	\psfrag{T001}{\tiny $T=0.01$} 
\includegraphics[width=8cm,height=5cm]{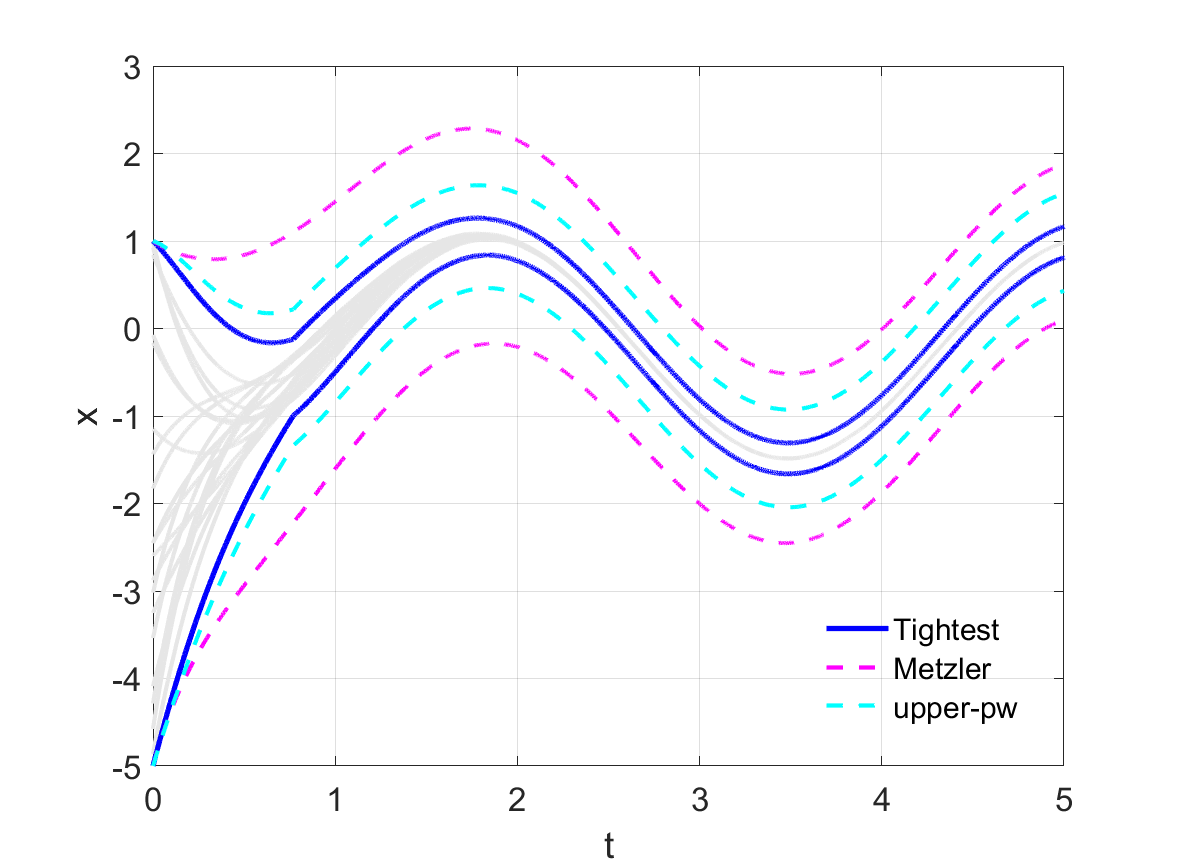}
\psfrag{x}[][]{\scriptsize $x_2$}
\includegraphics[width=8cm,height=5cm]{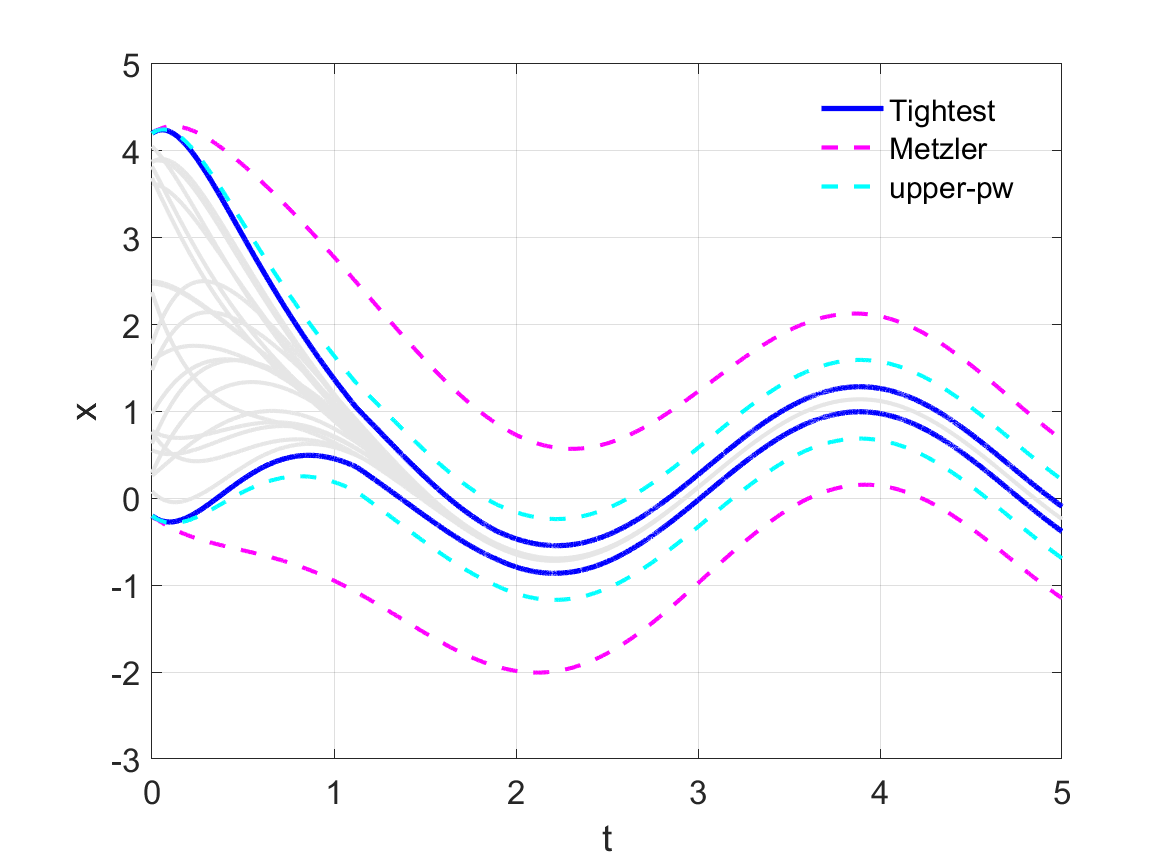}
\caption{Comparison of open-loop interval estimators: tightest (blue), estimator \eqref{eq:realization-px} (cyan) obtained by upper-bounding $p_w$ with $2$, approximation using a Metzler matrix (magenta). In gray are represented the state trajectories of the system generated from different initial conditions and different inputs.  }
	\label{fig:state2}
\end{figure}
%%%%%

%\vspace{-.2cm}
%%%%%%%%%%%%%%%%%%%%%%%%%%%%%%%%%%%%%
\section{Conclusion}\label{sec:Conclusion}
In this paper we have presented a new approach to the interval-valued state estimation problem. The proposed framework is mainly discussed for the case of continuous-time linear systems but it is  generalizable (to some extent) to  LTV systems and probably to some other  classes of systems.  The main contribution of this work consists in the derivation of the tightest interval-valued estimator which encloses all the possible state trajectories generated by an uncertain LTI system. A  numerical implementation of this estimator requires however some trade-off between tightness and computational load. Therefore some relaxations on tightness have been  discussed.  

%%%%%%%%%%%%%%%%%%%%

\bibliographystyle{abbrv}
%\bibliography{int_observers}

\end{document}